\newcommand{\cS}{{\mathcal{S}}}
\newcommand{\tdt}{D_t}
\newcommand{\bc}{y}
\newcommand{\pe}{u}
\newcommand{\vv}{v}
\newcommand{\vb}{v}
\newcommand{\wf}{w}
\newcommand{\ik}{\ell}
\newcommand{\iup}{\gamma}
\newcommand{\dd}[1]{\partial_{#1}}
\newcommand{\mme}{\mathbf{m}}
\newcommand{\ap}{\mathrm{q}}
\newcommand{\mc}{\mathrm{c}}
\newcommand{\ud}{u_\ell,\dots}
\newcommand{\ua}{{\mathrm{a}}}
\newcommand{\ub}{{\mathrm{b}}}
\newcommand{\tua}{{\tilde{\ua}}}
\newcommand{\tub}{{\tilde{\ub}}}
\newcommand{\fn}{f}
\newcommand{\mv}{v}
\newcommand{\inb}{b}
\newcommand{\zc}{z}
\newcommand{\mf}{\Phi}
\newcommand{\la}{\uplambda}
\newcommand{\ff}{\mathbf{F}}
\newcommand{\tff}{\tilde{\mathbf{F}}}
\newcommand{\fik}{\mathbb{C}}
\newcommand{\sm}{\mathrm{d}}
\newcommand{\lb}{\label}
\newcommand{\er}{\eqref}
\newcommand{\zp}{\mathbb{Z}_{\ge 0}}
\newcommand{\zsp}{\mathbb{Z}_{>0}}
\newcommand{\zz}{\mathbb{Z}}
\newcommand{\pd}{\partial}
\newcommand{\lt}{\mathbf{U}}
\newcommand{\mm}{\mathbf{M}}
\newcommand{\diunew}{N}
\newcommand{\tiltdt}{\tilde{\tdt}}
\newtheorem{theorem}{Theorem}
\theoremstyle{definition}
\newtheorem{definition}{Definition}
\newtheorem{remark}{Remark}
\begin{document}

\title[Matrix Lax pairs and Miura-type transformations for lattice equations]{Matrix Lax pairs 
under the gauge equivalence relation induced by the gauge group action
and Miura-type transformations \\ 
for lattice equations}
\date{}

\author{Sergei Igonin} 
\address{Center of Integrable Systems, P.G. Demidov Yaroslavl State University, Yaroslavl, Russia}
\email{s-igonin@yandex.ru}

\subjclass[2020]{37K60, 37K35}
\keywords{Integrable differential-difference equations; evolutionary lattice equations; matrix Lax representations; gauge transformations; gauge equivalence relation; discrete Miura-type transformations}

\begin{abstract}
In this paper we explore interconnections of 
differential-difference matrix Lax representations (Lax pairs), 
gauge transformations, and discrete Miura-type transformations (MTs),
which belong to the main tools in the theory of integrable 
differential-difference (lattice) equations.

For a given equation, two matrix Lax representations (MLRs) 
are said to be gauge equivalent if one of them 
can be obtained from the other by means of a (local) matrix gauge transformation.
Matrix gauge transformations constitute an infinite-dimensional group called the matrix gauge group, 
which acts naturally on the set of MLRs of a given equation.
Two MLRs are gauge equivalent if and only if they belong to the same 
orbit of the matrix gauge group action.

For a wide class of MLRs of (vector) evolutionary differential-difference 
equations, we present results on the following questions:

1. When and how can one simplify a given MLR 
by matrix gauge transformations and bring the MLR 
to a form suitable for constructing MTs?

2. A MLR is called fake if it is gauge equivalent to a trivial MLR.
How to determine whether a given MLR is not fake?

Here and in a different publication (with E.~Chistov),
we apply results of the present paper to the following integrable examples:
\begin{itemize}
\item a $3$-component lattice introduced by D.~Zhang and D.~Chen 
in their work on Hamiltonian structures of evolutionary lattice 
equations~\cite{ZhangChen2002},
\item some rational $1$-component equations of order~$(-2,2)$ related 
to the Narita--Itoh--Bogoyavlensky lattice,
\item the $2$-component Boussinesq lattice related to the lattice $W_3$-algebra,
\item a $2$-component equation 
(introduced by G.~Mar\'i Beffa and Jing~Ping~Wang
in their work on Hamiltonian evolutions of polygons~\cite{BeffaWang2013})
which describes the evolution induced on invariants by an invariant evolution of planar polygons.
\end{itemize}
This allows us to construct 
new integrable equations (with new MLRs) connected by new MTs to known equations.
\end{abstract}

\maketitle

\section{Introduction and the main ideas}
\lb{secint}

In this paper we explore interconnections of 
differential-difference matrix Lax representations (Lax pairs)
under the gauge equivalence relation induced by the matrix gauge group action
and discrete Miura-type transformations,
which play essential roles in the theory of integrable differential-difference (lattice) equations.
Such equations are currently the subject of intensive study, since they
occupy a prominent place in the modern theory of integrable systems and its applications
in mathematical physics and geometry.
In particular, such equations emerge in discretizations of integrable 
partial differential equations (PDEs), in discretizations of various differential geometric constructions
and as chains associated with B\"acklund and Darboux transformations of PDEs 
(see, 
e.g.,~\cite{BeffaWang2013,HJN-book2016,kmw2013,LWY-book2022,suris2003} 
and references therein).

We denote by $\zsp$ and $\zp$ the sets of positive and nonnegative integers, respectively.
Let $\diunew\in\zsp$. Below we consider an equation for an $\diunew$-component vector-function 
$u=\big(u^1(n,t),\dots,u^{\diunew}(n,t)\big)$
of an integer variable~$n$ and a real or complex variable~$t$.
For any fixed integer~$\ik$, 
we have $u_\ik=(u^1_\ik,\dots,u^{\diunew}_\ik)$,
where for each $\iup=1,\dots,\diunew$ 
the component $u_\ik^\iup$ is a function
of $n,t$ given by the formula $u_\ik^\iup(n,t)=u^\iup(n+\ik,t)$.
That is, $u_\ik(n,t)=u(n+\ik,t)$. In particular, $u_0=u$.

Let $\ua,\ub\in\zz$ such that $\ua\le\ub$.
Consider an evolutionary differential-difference (lattice) equation of the form
\begin{gather}
\lb{sddenew}
\pd_t(u)=\ff(u_\ua,u_{\ua+1},\dots,u_\ub),
\end{gather}
where $\ff$ is an $\diunew$-component vector-function $\ff=(\ff^1,\dots,\ff^{\diunew})$.

If the right-hand side of~\eqref{sddenew} depends nontrivially on~$u_\ua,u_\ub$,
then equation~\eqref{sddenew} is said to be \emph{of order~$(\ua,\ub)$}.

The differential-difference equation \eqref{sddenew} 
is equivalent to the following infinite collection of differential equations
$$
\pd_t\big(u(n,t)\big)=\ff\big(u(n+\ua,t),u(n+\ua+1,t),\dots,u(n+\ub,t)\big),\qquad\quad n\in\zz.
$$
Using the components of $u=\big(u^1(n,t),\dots,u^{\diunew}(n,t)\big)$ 
and of $\ff=(\ff^1,\dots,\ff^{\diunew})$, one can rewrite~\eqref{sddenew} as
\begin{gather}
\lb{msdde}
\pd_t\big(u^i\big)=\ff^i(u_\ua^\gamma,u_{\ua+1}^\gamma,\dots,u_\ub^\gamma),\qquad\quad
i=1,\dots,\diunew,
\end{gather}
which implies 
\begin{gather}
\lb{uildde}
\pd_t\big(u^i_\ell\big)=\ff^i(u_{\ua+\ell}^\gamma,u_{\ua+1+\ell}^\gamma,\dots,u_{\ub+\ell}^\gamma),\qquad\quad
i=1,\dots,\diunew,\qquad \ell\in\zz.
\end{gather}

We use the formal theory of differential-difference equations, where one regards 
\begin{gather}
\lb{uldiunew}
u_\ell=(u^1_\ell,\dots,u^\diunew_\ell),\qquad\quad \ell\in\zz,
\end{gather}
as independent quantities, which are called \emph{dynamical variables}.

In this paper, the notation of the type $\fn=\fn(\ud)$ means 
that a function~$\fn$ depends on a finite 
number of the dynamical variables $u_\ell^{\gamma}$ for $\ell\in\zz$ and $\gamma=1,\dots,\diunew$.
The notation of the type $\fn=\fn(u_\alpha,\dots,u_\beta)$ 
for some integers $\alpha\le\beta$ means that $\fn$ 
may depend on $u_\ell^{\gamma}$ for $\ell=\alpha,\dots,\beta$ 
and $\gamma=1,\dots,\diunew$.

We denote by~$\cS$ the \emph{shift operator} with respect to the variable~$n$.
For any function $g=g(n,t)$ one has the function~$\cS(g)$
such that $\cS(g)(n,t)=g(n+1,t)$.
Furthermore, for each $k\in\zz$, we have the $k$th power~$\cS^k$ 
of the operator~$\cS$ and the formula $\cS^k(g)(n,t)=g(n+k,t)$.

Since $u_\ell$ corresponds to $u(n+\ell,t)$, the operator~$\cS$ 
and its powers~$\cS^k$ for $k\in\zz$ act on functions of~$u_\ell$ by means of the rules
\begin{gather}
\lb{csuf}
\cS(u_\ell)=u_{\ell+1},\qquad\cS^k(u_\ell)=u_{\ell+k},\qquad
\cS^k\big(\fn(u_\ell,\dots)\big)=\fn(\cS^k(u_{\ell}),\dots).
\end{gather}
That is, applying $\cS^k$ to a function $\fn=\fn(\ud)$, 
we replace~$u_\ell^{\gamma}$ by~$u_{\ell+k}^{\gamma}$ in~$\fn$ for all $\ell\in\zz$ 
and $\gamma=1,\dots,\diunew$.

The \emph{total derivative operator~$\tdt$ corresponding to~\eqref{sddenew}} 
acts on functions of the variables~$u_\ell=(u^1_\ell,\dots,u^\diunew_\ell)$ as follows
\begin{gather}
\lb{dtfu}
\tdt\big(\fn(\ud)\big)=\sum_{\ell,\gamma}\cS^\ell(\ff^\gamma)\cdot\frac{\pd \fn}{\pd u_\ell^\gamma},
\end{gather}
where $\ff^\gamma=\ff^\gamma(u_\ua,\dots,u_\ub)$ are the components of the vector-function 
$\ff=(\ff^1,\dots,\ff^\diunew)$ from~\eqref{sddenew}.
Formula~\er{dtfu} reflects the chain rule for the derivative with respect to~$t$, 
taking into account equations~\er{uildde}.
From~\er{csuf},~\er{dtfu} it follows that, for any function $h=h(\ud)$,
one has $\tdt\big(\cS(h)\big)=\cS\big(\tdt(h)\big)$.

In this paper, matrix-functions are sometimes called simply matrices.

\begin{remark}
\lb{rratf}
In this paper we deal with functions which may depend 
on a finite number of the dynamical variables~\er{uldiunew} and a complex parameter~$\la$.
For simplicity of exposition, we consider only functions which depend rationally on
the variables~\er{uldiunew} and~$\la$.

In fact, many results of this paper can be extended to more general classes of functions, 
but such extension requires more technical and heavy proofs.
In order to present the main ideas clearly and concisely, 
we restrict ourselves to the case of rational functions.

In particular, for each $\sm\in\zsp$, 
when we consider a $\sm\times\sm$ matrix-function
$\mm=\mm(\ud,\la)$, we assume that each entry of the matrix~$\mm$ 
is a rational function of the variables~\er{uldiunew} and~$\la$.
When we say that such a matrix~$\mm(\ud,\la)$ is assumed to be invertible, 
we mean that the determinant $\det\big(\mm(\ud,\la)\big)$ 
is not identically zero (as a rational function). 
Then one has the rational function $\big(\det\big(\mm(\ud,\la)\big)\big)^{-1}$,
which allows us to compute the inverse matrix $\mm^{-1}=\big(\mm(\ud,\la)\big)^{-1}$
as a rational matrix-function.

Consider two $\sm\times\sm$ matrix-functions $\mm_i=\mm_i(\ud,\la)$, $\,i=1,2$,
which are invertible in the above sense. Basic properties of rational functions and determinants 
imply that the inverse matrix $\mm_1^{-1}=\big(\mm_1(\ud,\la)\big)^{-1}$
and the product $\mm_1(\ud,\la)\cdot\mm_2(\ud,\la)$ are invertible as well.
\end{remark}

\begin{definition}
\lb{dmlpgtnew}
Let $\sm\in\zsp$. 
Let $\mm=\mm(\ud,\la)$ and $\lt=\lt(\ud,\la)$ be $\sm\times\sm$ matrix-functions
depending on the dynamical variables~\er{uldiunew} and a complex parameter~$\la$.
Suppose that $\mm$ is invertible (in the sense of Remark~\ref{rratf}) and one has
\begin{gather}
\lb{lr}
\tdt(\mm)=\cS(\lt)\cdot\mm-\mm\cdot\lt,
\end{gather}
where $\tdt$ is given by~\eqref{dtfu}.
Then the pair $(\mm,\,\lt)$ is called a \emph{matrix Lax representation} (MLR) for equation~\eqref{sddenew}.

The connection between~\eqref{lr} and~\eqref{sddenew} is as follows.
The components~$\ff^\iup$ of the right-hand side $\ff(u_\ua,\dots,u_\ub)$ 
of~\eqref{sddenew} appear in formula~\er{dtfu} for the operator~$\tdt$, which is used in~\er{lr}.

Relation~\er{lr} implies that the following (overdetermined) auxiliary linear system
\begin{gather}
\lb{syspsi}
\cS(\Psi)=\mm\cdot\Psi,\qquad\quad
\pd_t(\Psi)=\lt\cdot\Psi
\end{gather}
is compatible modulo equation~\eqref{sddenew}. 
Here $\Psi=\Psi(n,t)$ is an invertible $\sm\times\sm$ matrix-function.

We say that the matrix $\mm=\mm(\ud,\la)$ is the \emph{$\cS$-part} of the MLR $(\mm,\,\lt)$.

Then for any invertible $\sm\times\sm$ matrix-function  
$\mathbf{g}=\mathbf{g}(\ud,\la)$ the matrices
\begin{gather}
\lb{ggtmtu}
\hat \mm=\cS(\mathbf{g})\cdot \mm\cdot\mathbf{g}^{-1},\qquad\quad
\hat\lt=\tdt(\mathbf{g})\cdot\mathbf{g}^{-1}+
\mathbf{g}\cdot\lt\cdot\mathbf{g}^{-1}
\end{gather}
form a MLR for equation~\eqref{sddenew} as well.
Here the $\sm\times\sm$ matrix $\mathbf{g}=\mathbf{g}(\ud,\la)$
is invertible in the sense of Remark~\ref{rratf} and is called a \emph{gauge transformation}.
The MLR $(\hat \mm,\,\hat\lt)$ is \emph{gauge equivalent} 
to the MLR $(\mm,\,\lt)$, and one can say that
$(\hat \mm,\,\hat\lt)$ is obtained from $(\mm,\,\lt)$ 
by means of the gauge transformation~$\mathbf{g}$.

Such gauge transformations~$\mathbf{g}=\mathbf{g}(\ud,\la)$ 
constitute an infinite-dimensional group 
with respect to the multiplication of invertible $\sm\times\sm$ matrices, 
which is called the \emph{matrix gauge group}.
Formulas~\er{ggtmtu} determine an action of the matrix gauge group
on the set of MLRs of a given equation~\eqref{sddenew}.
Two MLRs are gauge equivalent if and only if they belong to the same 
orbit of the matrix gauge group action.
\end{definition}
\begin{remark}
In~\cite{kmw2013,BIg2016} and some other publications, 
MLRs for differential-difference equations are called \emph{Darboux--Lax representations}, 
since many of them arise from Darboux transformations 
of partial differential equations (see, e.g.,~\cite{kmw2013}).
\end{remark}

\begin{remark}
\lb{rstrict}
Definition~\ref{dmlpgtnew} essentially says that $(\mm,\,\lt)$ is a MLR
for equation~\er{sddenew} if $\mm$ is invertible and relation~\er{lr} 
turns to identity in virtue of~\er{sddenew}. As noted in Definition~\ref{dmlpgtnew},
relation~\er{lr} implies that the overdetermined auxiliary linear system~\er{syspsi}
is compatible modulo equation~\eqref{sddenew}. 
That is, the compatibility of system~\er{syspsi} holds as a consequence of~\eqref{sddenew}. 

In the existing literature,
some authors define MLRs (or Lax pairs) in a more restrictive way, including 
the additional requirement that ``\emph{the compatibility of the overdetermined auxiliary linear system~\er{syspsi}
must be equivalent to equation~\eqref{sddenew}}''.
We do not need this additional requirement, 
since the methods of the present paper are applicable in the case when 
the compatibility of system~\er{syspsi} holds as a consequence of equation~\eqref{sddenew}.
\end{remark}

\begin{definition}
\lb{dtrmlp}
We consider MLRs for a given equation~\er{sddenew}.
A MLR~$(\mm,\,\lt)$ is said to be \emph{trivial} if $\mm$ does not depend 
on~$u_\ell$ for any $\ell\in\zz$.
Then \er{dtfu} yields $\tdt(\mm)=0$, and from~\er{lr} we get 
$\cS(\lt)=\mm\cdot\lt\cdot\mm^{-1}$, which 
implies that $\lt$ does not depend on~$u_\ell$ either.
Therefore, a trivial MLR does not provide any information about equation~\er{sddenew}.

A MLR is called \emph{fake} if it is gauge equivalent to a trivial MLR.
That is, fake MLRs are obtained from trivial MLRs by applying gauge transformations, 
and this can be done independently of equation~\er{sddenew}.
Hence a fake MLR does not provide any information about equation~\er{sddenew}.
\end{definition}

According to Definition~\ref{dmlpgtnew}, 
in a MLR~$(\mm,\,\lt)$ the matrix~$\mm$ may depend on 
a finite number of the variables~$u_{\ell}=(u^1_{\ell},\dots,u^\diunew_{\ell})$, $\,\ell\in\zz$, 
and a parameter~$\la$. For any fixed integers $r_1,\dots,r_\diunew$, we can relabel 
\begin{gather}
\lb{relabnew}
u^1_\ell\,\mapsto\,u^1_{\ell+r_1},\,\quad\dots\,\quad,\,\quad 
u^{\diunew}_\ell\,\mapsto\,u^{\diunew}_{\ell+r_\diunew}\qquad\quad\forall\,\ell\in\zz.
\end{gather}
Relabeling~\er{relabnew} means that in equation~\er{sddenew} 
we make the following invertible change of variables 
\begin{gather}
\notag
u^1(n,t)\,\mapsto\,u^1(n+r_1,t),\,\quad\dots\,\quad,\,\quad 
u^{\diunew}(n,t)\,\mapsto\,u^{\diunew}(n+r_\diunew,t).
\end{gather}
Applying a relabeling~\er{relabnew} with suitable fixed $r_1,\dots,r_\diunew\in\zz$, 
one can transform~$\mm$ to the form 
$\mm=\mm(u_0,\dots,u_p,\la)$ for some $p\in\zp$.

In this paper we study MLRs $(\mm,\,\lt)$ with $\cS$-part~$\mm$
depending only on~$u_0,\,u_1,\,u_2,\,\la$. That is, we assume that $\mm$ is of the form
\begin{gather}
\lb{muu2}
\mm=\mm(u_0,u_1,u_2,\la).
\end{gather}
To our knowledge, the majority of known examples of MLRs belong to this class 
or can be transformed to the form~\er{muu2} by means of a suitable relabeling~\er{relabnew}.

In formula~\er{muu2} we do not require nontrivial dependence of~$\mm$ 
on all the variables~$u_0,\,u_1,\,u_2$.
In particular, the cases $\mm=\mm(u_0,u_1,\la)$ and $\mm=\mm(u_0,\la)$ are included in~\er{muu2}.

In what follows, for any function $w=w(n,t)$ and each $\ell\in\zz$, 
we denote by~$w_\ell$ the function $w_\ell(n,t)=w(n+\ell,t)$.
That is, $w_\ell=\cS^\ell(w)$. In particular, $w_0=w$.

Now let $\tua,\tub\in\zz$, $\,\tua\le\tub$, 
and consider another evolutionary differential-difference equation
\begin{gather}
\lb{vddenew}
\pd_t(\mv)=\tff(\mv_\tua,\mv_{\tua+1},\dots,\mv_\tub)
\end{gather}
for an $\diunew$-component vector-function $\mv=\big(\mv^1(n,t),\dots,\mv^{\diunew}(n,t)\big)$.
In the right-hand side of~\er{vddenew} one has
an $\diunew$-component vector-function $\tff=(\tff^1,\dots,\tff^{\diunew})$.

\begin{definition}
\lb{defmtt}
A \emph{Miura-type transformation} (MT) from equation~\eqref{vddenew} 
to equation~\eqref{sddenew} is determined by an expression of the form
\begin{gather}
\lb{uvf}
u=\mf(\mv_\ik,\dots)
\end{gather}
with the following requirements:
\begin{enumerate}
	\item The right-hand side of~\er{uvf} 
	is an $\diunew$-component vector-function $\mf=(\mf^1,\dots,\mf^{\diunew})$
	depending on a finite number of the dynamical variables 
	$\mv_\ik=(\mv^1_\ik,\dots,\mv^\diunew_\ik)$, $\,\ik\in\zz$.
	\item If $\mv=\mv(n,t)$ obeys equation~\eqref{vddenew} 
	then $u=u(n,t)$ determined by~\eqref{uvf} obeys equation~\eqref{sddenew}.
\end{enumerate}
More precisely, the second requirement means that we must have relations~\er{dtmfi} explained below.
The components of the vector formula~\eqref{uvf} are 
\begin{gather}
\lb{uimfi}
u^i=\mf^i(\mv^\iup_\ik,\dots),\qquad\quad i=1,\dots,\diunew.
\end{gather}
Recall that \er{sddenew} is equivalent to~\er{msdde}.
Substituting the right-hand side of~\eqref{uimfi} in place of~$u^i$ in~\eqref{msdde}, we get 
\begin{gather}
\lb{dtmfi}
\tiltdt\big(\mf^i(\mv^\iup_\ik,\dots)\big)=
\ff^i\big(\cS^\ua(\mf^\iup),\cS^{\ua+1}(\mf^\iup),\dots,\cS^\ub(\mf^\iup)\big),\qquad\quad
i=1,\dots,\diunew,
\end{gather}
which must be valid identically in the variables $\mv^\gamma_\ell$.
Here $\tiltdt$ is the total derivative operator corresponding to~\eqref{vddenew}, 
so that in the left-hand side of~\er{dtmfi} we have 
$\tiltdt\big(\mf^i(\mv^\iup_\ik,\dots)\big)=%
\sum_{\ell,\gamma}\cS^\ell\big(\tff^\gamma\big)\cdot\dfrac{\pd\mf^i}{\pd \mv^\gamma_\ell}$,
where $\tff^\gamma=\tff^\gamma(\mv_\tua,\dots,\mv_\tub)$ are the components 
of the vector-function $\tff=(\tff^1,\dots,\tff^{\diunew})$ from~\eqref{vddenew}.

\end{definition}
\begin{remark}
\lb{rtermmt}
MTs for differential-difference equations
are also called \emph{discrete substitutions}~\cite{yam1994} and 
are a discrete analog of MTs for partial differential equations~\cite{miura1968}.

MTs for partial differential equations are sometimes called \emph{differential substitutions}.
\end{remark}

In Section~\ref{srmlp} we present the following results.
\begin{itemize}
\item 
For a given MLR $(\mm,\,\lt)$ with $\cS$-part of the form~\er{muu2},
Theorem~\ref{thmuknew} provides sufficient conditions for the possibility to transform 
(by means of a gauge transformation) the MLR $(\mm,\,\lt)$ to a MLR 
$(\hat{\mm},\,\hat{\lt})$ with $\cS$-part of the form $\hat{\mm}=\hat{\mm}(u_0,\la)$. 

Namely, Theorem~\ref{thmuknew} says that, 
if $\mm=\mm(u_0,u_1,u_2,\la)$ obeys conditions~\er{pdmnew},~\er{pdm01}, 
then there is a gauge transformation $\mathbf{g}=\mathbf{g}(u_0,u_{1},\la)$ 
given by the explicit formula~\er{guu1} such that
the matrix-function $\hat{\mm}=\cS(\mathbf{g})\cdot \mm\cdot\mathbf{g}^{-1}$ 
is of the form $\hat{\mm}=\hat{\mm}(u_0,\la)$. (Thus $\hat{\mm}$ does not depend on~$u_1,u_2$.)
Hence conditions~\er{pdmnew},~\er{pdm01} allow us 
to simplify the MLR $(\mm,\,\lt)$, by applying to it some explicit gauge transformation,
eliminating the dependence on~$u_1,u_2$ in the~\mbox{$\cS$-part} of the MLR. 

Having obtained a MLR $(\hat{\mm},\,\hat{\lt})$ with $\hat{\mm}=\hat{\mm}(u_0,\la)$,
one can often use it for constructing new integrable equations with new MTs 
connecting these equations to the initial equation.
Several examples of such constructions are discussed below.

\item 
Theorem~\ref{thtrmlp} says that 
a MLR $(\mm,\,\lt)$ with $\cS$-part of the form~\er{muu2}
is gauge equivalent to a trivial MLR if and only if $\mm$ satisfies~\er{ctrmlp}.

Thus, a MLR $(\mm,\,\lt)$ is not fake (in the sense of Definition~\ref{dtrmlp})
if and only if $\mm$ does not satisfy~\er{ctrmlp}.

\item As said above, the result of Theorem~\ref{thmuknew} concerns 
MLRs $(\mm,\,\lt)$ with $\cS$-part of the form~\er{muu2}.
A similar result on MLRs with $\cS$-part of the form
$\mm=\mm(u_0,u_1,\la)$ is presented in Theorem~\ref{thmuk1}.
\end{itemize}
In the proofs of Theorems~\ref{thmuknew},~\ref{thmuk1} 
we use some ideas from~\cite{IgSimpl2024} 
on simplifications of MLRs by gauge transformations.

When one makes a classification for some class 
of integrable (partial differential, difference or differential-difference) 
equations with two independent variables,
it is often possible to obtain a few basic equations (depending on some parameters)
such that all the other equations from the considered class can be derived from 
the basic ones by means of MTs 
(see, e.g.,~\cite{mss91,yam2006,garif2017,garif2018,gram2011,meshk2008,LWY-book2022} 
and references therein).
Also, it is well known that MTs often help to obtain conservation laws 
and B\"acklund transformations for equations with two independent variables 
(see, e.g.,~\cite{miura-cl,suris2003,HJN-book2016,LWY-book2022}).

Therefore, it is desirable to develop constructions of MTs.
As discussed below, several constructions of MTs are given 
in Sections~\ref{sinb},~\ref{szhch}, using results of Section~\ref{srmlp} on~MLRs, 
and this allows us to derive 
new integrable equations (with new MLRs) connected by new MTs to known equations.

The above-mentioned Theorems~\ref{thmuknew},~\ref{thmuk1} 
from Section~\ref{srmlp} help us to obtain the following results in 
Sections~\ref{sinb},~\ref{szhch}.
\begin{itemize}
	\item We construct new $3$-component integrable 
	equations~\er{npeq},~\er{hnpeq} 
	and new MTs~\er{bu0vb}, \er{nmtpbc}, \er{hnmtpbc}, \er{cmtzc}.
	In formulas \er{npeq}, \er{hnpeq}, \er{bu0vb}, \er{nmtpbc}, \er{hnmtpbc}, \er{cmtzc}
	we have arbitrary constants~$\mc,\ap\in\fik$.
	
The functions $F^i(\mc,\ap,\vv^1_\ell,\vv^2_\ell,\vv^3_\ell,\dots)$, $\,i=1,2,3$, 
in~\er{hnpeq} are obtained by substituting~\er{hpeell} to the right-hand sides
of equations~\er{dtv1p},~\er{dtv2p},~\er{dtv3p}. We do not present 
explicit formulas for these functions, since they are rather cumbersome.
	
The MT~\er{hnmtpbc} is from~\er{hnpeq} to~\er{npeq}.
The MTs~\er{nmtpbc},~\er{cmtzc} connect equations~\er{npeq},~\er{hnpeq} 
to the $3$-component integrable equation~\er{zceq} 
introduced by D.~Zhang and D.~Chen~\cite{ZhangChen2002}.

The MT~\er{bu0vb} connects the $1$-component integrable equation~\er{dtwf} 
to the well-known integrable modified Narita--Itoh--Bogoyavlensky equation~\er{bpeq} 
related to the standard Narita--Itoh--Bogo\-yav\-len\-sky (NIB) lattice~\er{be}
by the well-known formula~\er{bmtpbc}.
The $1$-component equations~\er{be},~\er{bpeq},~\er{dtwf} are of order~$(-2,2)$.

Equation~\er{dtwf} was obtained previously 
in~\cite[equation~(161)]{BIg2016} by a different construction.
In the case when $\mc\neq 0$ and $\ap\neq 0$ equation~\er{dtwf}
can be transformed to~\cite[equation~(5.11)]{xenit2018} by scaling.

\item Also, we construct a new MLR~\er{zchmm},~\er{zchlt} 
for equation~\er{npeq} and new MLRs (described in Remarks~\ref{ridtwf},~\ref{rihnpeq}) 
for equations~\er{dtwf},~\er{hnpeq}. Each of these MLRs has a parameter~$\la\in\fik$.
\end{itemize}

\begin{remark}
\lb{rmtnib}
MTs over the NIB equation~\er{be} were studied in~\cite{GarYam2019}.
The MT~\er{bu0vb} does not appear in~\cite{GarYam2019}, 
since \er{bu0vb} is over the modified NIB equation~\er{bpeq}.

Many MTs for various equations related to~\er{be} and~\er{bpeq} can be found 
in~\cite{suris2003,BIg2016,garif2017,garif2018,GarYam2019,IgSimpl2024,MikhXenit2014,svinin2011,xenit2018} 
and references therein.
The MT~\er{bu0vb} from~\er{dtwf} to~\eqref{bpeq} 
with arbitrary constants~$\mc,\ap\in\fik$ is not equivalent 
to any of the MTs in the available literature, as far as we can see.

In the case when $\mc\neq 0$ and $\ap\neq 0$ the MT~\er{bu0vb} 
can be transformed to the MT in~\cite[page 16]{xenit2018}
by scaling and shift.
\end{remark}

Further applications of Theorem~\ref{thmuknew} 
from this~paper are presented in the joint work with E.~Chistov~\cite{ChistIg2024} 
and concern some $2$-component equations discussed below.

The $2$-component Belov--Chaltikian lattice, 
which is the Boussinesq lattice related to the lattice 
$W_3$-algebra~\cite{Belov93,hiin97,BeffaWang2013,kmw2013}, 
is the $2$-component equation
\begin{gather}
\label{bce}
\left\{
\begin{aligned}
\pd_t(\bc^1)&= \bc^1 (\bc^2_2-\bc^2_{-1}),\\ 
\pd_t(\bc^2)&= \bc^1_{-1}-\bc^1 + \bc^2 (\bc^2_1-\bc^2_{-1}),
\end{aligned}\right.
\end{gather}
where $\bc^1=\bc^1(n,t)$, $\,\bc^2=\bc^2(n,t)$.
Studying Hamiltonian evolutions of polygons,
G.~Mar\'i Beffa and Jing~Ping~Wang~\cite{BeffaWang2013} obtained 
the $2$-component equation
\begin{gather}
\label{peq}
\left\{
\begin{aligned}
\pd_t(\pe^1)&= \frac{1}{\pe^2_1}-\frac{1}{\pe^2_{-2}},\\ 
\pd_t(\pe^2)&= \frac{\pe^1_1}{\pe^2_1}-\frac{\pe^1}{\pe^2_{-1}}
\end{aligned}\right.
\end{gather}
describing the evolution induced on invariants by an invariant evolution of planar polygons. Also, it is shown in~\cite{BeffaWang2013}
that one has the following MT from~\er{peq} to~\er{bce}
\begin{gather}
\notag
\bc^1=
\frac{1}{\pe^2 \pe^2_1 \pe^2_2},\qquad\quad
\bc^2=-\frac{\pe^1_1}{\pe^2 \pe^2_1}.
\end{gather}

Using Theorem~\ref{thmuknew} from this~paper, 
in the joint work with E.~Chistov~\cite{ChistIg2024} we have constructed
a parametric family of new $2$-component integrable equations (with new MLRs)
connected by new MTs to equations~\er{bce},~\er{peq}, 
which are known to be integrable and possess MLRs.

\begin{remark}
\lb{rpmlr}
Recall that we consider MLRs for a given equation~\er{sddenew}.
We say that a MLR~$(\mm,\,\lt)$ is \emph{poor} if the matrices 
$\mm=\mm(\ud,\la)$ and $\lt=\lt(\ud,\la)$ satisfy at least one of the following conditions:
\begin{itemize}
	\item $\mm$, $\lt$ do not depend on~$\lambda$.
	\item The matrices $\mm$, $\lt$	are diagonal.
	\item $\mm$, $\lt$ do not depend on~$u_\ell$ for any $\ell\in\zz$.
\end{itemize}
Any trivial MLR is poor. 
The class of poor MLR is much wider than that of trivial ones.
For a given equation~\er{sddenew}, 
existence of a poor MLR is not sufficient for establishing integrability of~\er{sddenew},
but a poor MLR sometimes does give (limited) useful information about equation~\er{sddenew}.

For instance, let $(\mm,\,\lt)$ be a poor MLR for~\er{sddenew} such that 
\begin{itemize}
	\item $\mm$, $\lt$ are diagonal $\sm\times\sm$ matrices for some $\sm\in\zsp$,
	\item there is $j\in\{1,\ldots,\sm\}$ such that the $j$th diagonal entry~$(\mm)_{jj}$ of~$\mm$
	depends nontrivially on~$u_\ell$ for some $\ell\in\zz$.
\end{itemize}
Then, as is well known, relation~\er{lr} implies that the function $\ln\big((\mm)_{jj}\big)$
is a conserved density for equation~\er{sddenew}, since from~\er{lr} we get
\begin{gather}
\lb{lrdj}
\tdt\Big(\ln\big((\mm)_{jj}\big)\Big)=
\frac{\tdt\big((\mm)_{jj}\big)}{(\mm)_{jj}}=\cS\big((\lt)_{jj}\big)-(\lt)_{jj}=
(\cS-1)\big((\lt)_{jj}\big),
\end{gather}
where $(\lt)_{jj}$ is the $j$th diagonal entry of~$\lt$.
Furthermore, if $(\mm)_{jj}$ depends nontrivially on~$\lambda$, relation~\er{lrdj}
implies that the functions $\dfrac{d^k}{d\lambda^k}\Big(\ln\big((\mm)_{jj}\big)\Big)$, $k\in\zsp$, 
are conserved densities for~\er{sddenew} as well.
Thus such poor MLR~$(\mm,\,\lt)$ gives 
a limited amount of useful information 
about equation~\er{sddenew} in the form of conserved densities.

Theorem~\ref{thtrmlp} says that 
a MLR $(\mm,\,\lt)$ with $\cS$-part of the form~\er{muu2}
is gauge equivalent to a trivial MLR if and only if $\mm$ satisfies~\er{ctrmlp}.
It would be interesting to obtain some results on the following question: 

\noindent
Given a MLR, how to determine whether it is gauge equivalent to a poor MLR or not?
We leave this question for future work.

As said above, existence of a poor MLR is not sufficient for establishing integrability of equation~\er{sddenew}.
The examples of MLRs considered in Sections~\ref{sinb},~\ref{szhch} are not poor.

\end{remark}

\section{Results on matrix Lax representations}
\lb{srmlp}

As explained in Section~\ref{secint}, 
we have the dynamical variables 
$u_{\ell}=(u^1_{\ell},\dots,u^\diunew_{\ell})$ for $\ell\in\zz$. 

\begin{remark}
\lb{rchv}
In Theorems~\ref{thmuknew},~\ref{thtrmlp} below, 
we consider an invertible $\sm\times\sm$ matrix-function $\mm=\mm(u_0,u_1,u_2,\la)$,
and we need to choose a constant vector $a_0\in\fik^\diunew$ so that 
the matrix $\mm(a_0,u_1,u_2,\la)$ is well defined and is invertible.
Here $\mm(a_0,u_1,u_2,\la)$ is obtained from~$\mm(u_0,u_1,u_2,\la)$ 
by substituting $u_0=a_0$.

For a given invertible matrix $\mm(u_0,u_1,u_2,\la)$, such a vector $a_0\in\fik^\diunew$
can be chosen as follows. 
For $i,j\in\{1,\ldots,\sm\}$ the $(i,j)$ entry of the matrix $\mm$ 
is denoted by $\mme_{ij}=\mme_{ij}(u_0,u_1,u_2,\la)$.
Since, according to Remark~\ref{rratf}, $\mme_{ij}(u_0,u_1,u_2,\la)$ 
is assumed to be a rational function, we have
\begin{gather*}
\mme_{ij}(u_0,u_1,u_2,\la)=\frac{f_{ij}(u_0,u_1,u_2,\la)}{g_{ij}(u_0,u_1,u_2,\la)},
\qquad i,j=1,\ldots,\sm,\qquad
\det\big(\mm(u_0,u_1,u_2,\la)\big)=\frac{P(u_0,u_1,u_2,\la)}{Q(u_0,u_1,u_2,\la)},
\end{gather*}
where 
\begin{itemize}
	\item $f_{ij}(u_0,u_1,u_2,\la)$, $\,g_{ij}(u_0,u_1,u_2,\la)$, $\,P(u_0,u_1,u_2,\la)$, 
	$\,Q(u_0,u_1,u_2,\la)$ are some polynomials,
	\item each of the polynomials
\begin{gather*}
g_{ij}(u_0,u_1,u_2,\la),\qquad i,j=1,\ldots,\sm,\qquad
P(u_0,u_1,u_2,\la),\qquad Q(u_0,u_1,u_2,\la)
\end{gather*}
is not identically zero.
\end{itemize}
Then there is an open dense subset $W\subset\fik^\diunew$ such that 
for any constant vector $a_0\in W$ each of the polynomials
\begin{gather*}
g_{ij}(a_0,u_1,u_2,\la),\qquad i,j=1,\ldots,\sm,\qquad
P(a_0,u_1,u_2,\la),\qquad Q(a_0,u_1,u_2,\la)
\end{gather*}
is not identically zero. This implies that for any $a_0\in W$
the matrix $\mm(a_0,u_1,u_2,\la)$ is well defined and is invertible 
(in the sense of Remark~\ref{rratf}).

Similar considerations are applicable also in the following places:
\begin{itemize}
\item In Theorem~\ref{thmuknew}, where, for some invertible matrix
$\tilde{\mm}(u_0,u_{1},\la)$, we need to choose a constant vector 
$\tilde{a}_0\in\fik^\diunew$ so that 
the matrix $\tilde{\mm}(\tilde{a}_0,u_1,\la)$ is well defined and is invertible.
\item In Theorem~\ref{thmuk1}, 
where one considers an invertible matrix-function $\mm=\mm(u_0,u_1,\la)$.
\end{itemize}
\end{remark}

\begin{theorem}
\label{thmuknew}
Let $\sm\in\zsp$. 
Consider an invertible $\sm\times\sm$ matrix-function $\mm=\mm(u_0,u_1,u_2,\la)$. 
Suppose that 
\begin{gather}
\lb{pdmnew}
\forall\, i,j=1,\dots,\diunew\qquad\quad
\dd{u^i_0}\Big(\dd{u^j_2}
\big(\mm(u_0,u_1,u_2,\la)\big)\cdot \mm(u_0,u_1,u_2,\la)^{-1}\Big)=0,\\
\lb{pdm01}
\begin{split}
\forall\, i,j&=1,\dots,\diunew\qquad
\dd{u^i_0}
\left(\dd{u^j_1}\big(\mm(u_0,u_1,u_2,\la)\big)\cdot\mm(u_0,u_1,u_2,\la)^{-1}+\right.\\
&\left.+\mm(u_0,u_1,u_2,\la)\cdot
\dd{u^j_1}\big(\mm(a_0,u_0,u_1,\la)\big)\cdot\mm(a_0,u_0,u_1,\la)^{-1}
\cdot\mm(u_0,u_1,u_2,\la)^{-1}\right)=0,
\end{split}
\end{gather}
where $a_0\in\fik^\diunew$ is a constant vector and\/ 
$\mm(a_0,u_0,u_1,\la)=\cS^{-1}\big(\mm(a_0,u_1,u_2,\la)\big)$.
The matrix\/ 
$\mm(a_0,u_1,u_2,\la)$ is obtained from~$\mm(u_0,u_1,u_2,\la)$ 
by substituting $u_0=a_0$.
Here a vector $a_0\in\fik^\diunew$ is chosen so that the matrix\/ 
$\mm(a_0,u_1,u_2,\la)$ is well defined and is invertible.
Remark~\textup{\ref{rchv}} explains how to choose such a vector.

Condition~\er{pdmnew} implies that the matrix-function 
\begin{gather}
\lb{tilm}
\tilde{\mm}=\mm(a_0,u_1,u_2,\la)^{-1}\cdot\mm(u_0,u_1,u_2,\la)\cdot\mm(a_0,u_0,u_1,\la)
\end{gather}
does not depend on~$u_2$.
Thus $\tilde{\mm}$ is of the form $\tilde{\mm}=\tilde{\mm}(u_0,u_{1},\la)$. 
Consider the gauge transformation
\begin{gather}
\lb{guu1}
\mathbf{g}(u_0,u_{1},\la)=
\tilde{\mm}(\tilde{a}_0,u_0,\la)^{-1}\cdot\mm(a_0,u_0,u_1,\la)^{-1},
\end{gather}
where $\tilde{a}_0\in\fik^\diunew$ is another constant vector and\/ 
$\tilde{\mm}(\tilde{a}_0,u_0,\la)=\cS^{-1}\big(\tilde{\mm}(\tilde{a}_0,u_1,\la)\big)$.
Here $\tilde{a}_0\in\fik^\diunew$ is chosen so that $\tilde{\mm}(\tilde{a}_0,u_1,\la)$ is well defined and is invertible.

Conditions~\er{pdmnew},~\er{pdm01} imply that the matrix-function 
\begin{gather}
\lb{hatmm}
\hat{\mm}=
\cS\big(\mathbf{g}(u_0,u_{1},\la)\big)\cdot\mm(u_0,u_1,u_2,\la)\cdot\mathbf{g}(u_0,u_{1},\la)^{-1}
\end{gather}
does not depend on~$u_1,\,u_2$.
Thus $\hat{\mm}$ is of the form $\hat{\mm}=\hat{\mm}(u_0,\la)$. 

Now suppose that we have a MLR $(\mm,\,\lt)$ with~$\mm=\mm(u_0,u_1,u_2,\la)$.
Conditions~\er{pdmnew},~\er{pdm01} are sufficient for the possibility to transform 
\textup{(}by means of a gauge transformation\textup{)} the MLR $(\mm,\,\lt)$ to a MLR 
$(\hat{\mm},\,\hat{\lt})$ with $\cS$-part of the form $\hat{\mm}=\hat{\mm}(u_0,\la)$.
One can compute $(\hat{\mm},\,\hat{\lt})$ by means of~\er{hatmm} and 
\begin{gather}
\lb{hatlt}
\hat{\lt}=\tdt\big(\mathbf{g}(u_0,u_{1},\la)\big)\cdot\mathbf{g}(u_0,u_{1},\la)^{-1}+
\mathbf{g}(u_0,u_{1},\la)\cdot\lt\cdot\mathbf{g}(u_0,u_{1},\la)^{-1},
\end{gather}
where the gauge transformation $\mathbf{g}(u_0,u_{1},\la)$ is given by~\er{guu1}.
\end{theorem}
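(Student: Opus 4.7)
I would reduce the three-variable situation to two successive applications of a single ``one-variable elimination'' step: first use~\er{pdmnew} to kill the dependence on $u_2$, producing $\tilde{\mm}(u_0,u_1,\la)$; then use~\er{pdm01} to show that $\tilde{\mm}$ satisfies the analog of~\er{pdmnew} one index lower, and apply the same lemma again to eliminate $u_1$. Throughout, abbreviate $M=\mm(u_0,u_1,u_2,\la)$, $M_0=\mm(a_0,u_1,u_2,\la)$, and $N=\mm(a_0,u_0,u_1,\la)=\cS^{-1}(M_0)$, so that $\tilde{\mm}=M_0^{-1}MN$ by~\er{tilm}. Note $N$ is $u_2$-free and $M_0$ is $u_0$-free.

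\textbf{Elimination of $u_2$.} Condition~\er{pdmnew} says $A:=(\pd_{u^j_2}M)M^{-1}$ is independent of $u_0$; specializing $u_0=a_0$ gives $A=(\pd_{u^j_2}M_0)M_0^{-1}$, so $\pd_{u^j_2}M=AM$ and $\pd_{u^j_2}M_0=AM_0$, whence $\pd_{u^j_2}M_0^{-1}=-M_0^{-1}A$. Since $\pd_{u^j_2}N=0$, the product rule yields
\begin{equation*}
\pd_{u^j_2}\tilde{\mm}=(\pd_{u^j_2}M_0^{-1})\cdot MN+M_0^{-1}(\pd_{u^j_2}M)N=-M_0^{-1}A\cdot MN+M_0^{-1}\cdot AM\cdot N=0,
\end{equation*}
so $\tilde{\mm}$ depends only on $u_0,u_1,\la$, as required.

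\textbf{Propagation and elimination of $u_1$.} The crucial inheritance claim is
$\pd_{u^i_0}\bigl((\pd_{u^j_1}\tilde{\mm})\tilde{\mm}^{-1}\bigr)=0$,
i.e.\ $\tilde{\mm}$ satisfies the analog of~\er{pdmnew} with $(u_1,u_2)$ replaced by $(u_0,u_1)$. A direct expansion of $\tilde{\mm}=M_0^{-1}MN$ and $\tilde{\mm}^{-1}=N^{-1}M^{-1}M_0$ via the product rule gives
\begin{equation*}
(\pd_{u^j_1}\tilde{\mm})\cdot\tilde{\mm}^{-1}=-M_0^{-1}(\pd_{u^j_1}M_0)+M_0^{-1}\cdot B\cdot M_0,\qquad B:=(\pd_{u^j_1}M)M^{-1}+M(\pd_{u^j_1}N)N^{-1}M^{-1}.
\end{equation*}
The first summand is $u_0$-free because $M_0$ is; the bracket $B$ is precisely the expression inside~\er{pdm01} and so is $u_0$-free by hypothesis; conjugation by the $u_0$-free matrix $M_0$ preserves this, proving the claim. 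Writing $\tilde{M}=\tilde{\mm}(u_0,u_1,\la)$, $\tilde{M}_0=\tilde{\mm}(\tilde{a}_0,u_1,\la)$, $\tilde{N}=\tilde{\mm}(\tilde{a}_0,u_0,\la)$, and using $\mathbf{g}=\tilde{N}^{-1}N^{-1}$ together with $\cS(\mathbf{g})=\tilde{M}_0^{-1}M_0^{-1}$, one computes
\begin{equation*}
\hat{\mm}=\cS(\mathbf{g})\cdot M\cdot\mathbf{g}^{-1}=\tilde{M}_0^{-1}M_0^{-1}\cdot M\cdot N\tilde{N}=\tilde{M}_0^{-1}\,\tilde{M}\,\tilde{N},
\end{equation*}
which has exactly the shape of the previous step applied to $\tilde{\mm}$ with $\tilde{a}_0$ replacing $a_0$. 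Rerunning the Elimination argument on $\tilde{\mm}$, using the inherited condition just proved, yields $\pd_{u^j_1}\hat{\mm}=0$; since $\tilde{\mm}$ is already $u_2$-free, so is $\hat{\mm}$. Invertibility of $\mathbf{g}$ follows from the assumptions on $a_0$ and $\tilde{a}_0$ (Remark~\ref{rchv}), and~\er{hatmm},~\er{hatlt} are then the standard gauge-transformation formulas~\er{ggtmtu}.

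\textbf{Main obstacle.} The individual calculations are short, but the substantive point is recognizing the self-similar structure: hypothesis~\er{pdm01}, with its somewhat asymmetric mixture of $M$ at $(u_0,u_1,u_2)$ and the backward-shifted matrix $N=\cS^{-1}(M_0)$, is tuned exactly so that $(\pd_{u^j_1}\tilde{\mm})\tilde{\mm}^{-1}$ decomposes into a $u_0$-free piece plus a conjugate, by $M_0$, of the bracket appearing in~\er{pdm01}. Identifying this splitting---and the corresponding choice of gauge $\mathbf{g}=\tilde{\mm}(\tilde{a}_0,u_0,\la)^{-1}\mm(a_0,u_0,u_1,\la)^{-1}$ which iterates the Step~1 construction one level up---is the heart of the argument.
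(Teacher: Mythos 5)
Your proposal is correct and follows essentially the same route as the paper's proof: eliminate $u_2$ from $\tilde{\mm}$ using \eqref{pdmnew}, observe that \eqref{pdm01} makes $\dd{u^j_1}\big(\tilde{\mm}\big)\cdot\tilde{\mm}^{-1}$ independent of $u_0$, rewrite $\hat{\mm}$ as $\tilde{\mm}(\tilde{a}_0,u_1,\la)^{-1}\cdot\tilde{\mm}(u_0,u_1,\la)\cdot\tilde{\mm}(\tilde{a}_0,u_0,\la)$ and rerun the same elimination one level down, and finish with the gauge-transformation formulas \eqref{ggtmtu}. (Incidentally, your expansion of $\dd{u^j_1}\big(\tilde{\mm}\big)\cdot\tilde{\mm}^{-1}$, whose first summand is $-\mm(a_0,u_1,u_2,\la)^{-1}\cdot\dd{u^j_1}\big(\mm(a_0,u_1,u_2,\la)\big)$, is the correct one; the additional factor $\mm(a_0,u_1,u_2,\la)^{-1}$ appearing in that summand in the paper's \eqref{dtmtm} is a harmless slip, since both versions are independent of $u_0$ and the subsequent step \eqref{pdtm} is unaffected.)
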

\begin{proof}
Let $j\in\{1,\dots,\diunew\}$. 
Using~\er{tilm}, we obtain
\begin{multline}
\lb{pdtilmn}
\dd{u^j_2}\big(\tilde{\mm}\big)
=\dd{u^j_2}\big(\mm(a_0,u_1,u_2,\la)^{-1}\cdot\mm(u_0,u_1,u_2,\la)\big)
\cdot\mm(a_0,u_0,u_1,\la)=\\
=\Big(\dd{u^j_2}\big(\mm(a_0,u_1,u_2,\la)^{-1}\big)\cdot \mm(u_0,u_1,u_2,\la)
+\mm(a_0,u_1,u_2,\la)^{-1}\cdot\dd{u^j_2}\big(\mm(u_0,u_1,u_2,\la)\big)\Big)
\cdot\mm(a_0,u_0,u_1,\la)=\\
=\Big(-\mm(a_0,u_1,u_2,\la)^{-1}\cdot\dd{u^j_2}\big(\mm(a_0,u_1,u_2,\la)\big)\cdot 
\mm(a_0,u_1,u_2,\la)^{-1}\cdot \mm(u_0,u_1,u_2,\la)+\\
+\mm(a_0,u_1,u_2,\la)^{-1}\cdot\dd{u^j_2}\big(\mm(u_0,u_1,u_2,\la)\big)\Big)
\cdot\mm(a_0,u_0,u_1,\la)=\\
=\mm(a_0,u_1,u_2,\la)^{-1}\cdot\Big(-\dd{u^j_2}\big(\mm(a_0,u_1,u_2,\la)\big)\cdot
\mm(a_0,u_1,u_2,\la)^{-1}+\\
+\dd{u^j_2}\big(\mm(u_0,u_1,u_2,\la)\big)\cdot
\mm(u_0,u_1,u_2,\la)^{-1}\Big)\cdot \mm(u_0,u_1,u_2,\la)\cdot\mm(a_0,u_0,u_1,\la)=\\
=\mm(a_0,u_1,u_2,\la)^{-1}\cdot Y^j(u_0,u_1,u_2,\la)\cdot \mm(u_0,u_1,u_2,\la)\cdot\mm(a_0,u_0,u_1,\la),
\end{multline}
where 
\begin{gather}
\lb{Lunew}
Y^j(u_0,u_1,u_2,\la)=-\dd{u^j_2}\big(\mm(a_0,u_1,u_2,\la)\big)\cdot
\mm(a_0,u_1,u_2,\la)^{-1}+\dd{u^j_2}\big(\mm(u_0,u_1,u_2,\la)\big)\cdot \mm(u_0,u_1,u_2,\la)^{-1}.
\end{gather}
From~\er{Lunew} and~\er{pdmnew} it follows that
\begin{gather}
\lb{pdLnew}
\dd{u^i_0}\big(Y^j(u_0,u_1,u_2,\la)\big)=
\dd{u^i_0}\Big(\dd{u^j_2}
\big(\mm(u_0,u_1,u_2,\la)\big)\cdot \mm(u_0,u_1,u_2,\la)^{-1}\Big)=0\qquad
\forall\, i=1,\dots,\diunew,\\
\lb{La0new}
Y^j(a_0,u_1,u_2,\la)=0.
\end{gather}

Since, according to Remark~\ref{rratf}, the matrix-function $Y^j(u_0,u_1,u_2,\la)$ is rational,
equations~\er{pdLnew},~\er{La0new} imply that $Y^j(u_0,u_1,u_2,\la)$ is identically zero.
(From~\er{pdLnew},~\er{La0new} it follows that the matrix-function $Y^j(u_0,u_1,u_2,\la)$ 
is zero on some non-empty open subset of the space with coordinates~$u_0,u_1,u_2,\la$.
Since $Y^j(u_0,u_1,u_2,\la)$ is rational, this implies that $Y^j(u_0,u_1,u_2,\la)$ is identically zero.)

Substituting $Y^j(u_0,u_1,u_2,\la)=0$ in~\er{pdtilmn}, one obtains
$\dfrac{\pd}{\pd u^j_2}\big(\tilde{\mm}\big)=0$ for all $j=1,\dots,\diunew$.
Hence the matrix-function \er{tilm} is of the form $\tilde{\mm}=\tilde{\mm}(u_0,u_{1},\la)$. 
Using~\er{tilm}, for each $j\in\{1,\dots,\diunew\}$ one gets
\begin{multline}
\lb{tmlong}
\dd{u^j_1}
\big(\tilde{\mm}(u_0,u_{1},\la)\big)=
\dd{u^j_1}
\Big(\mm(a_0,u_1,u_2,\la)^{-1}\cdot\mm(u_0,u_1,u_2,\la)\cdot\mm(a_0,u_0,u_1,\la)\Big)=\\
=-\mm(a_0,u_1,u_2,\la)^{-1}\cdot\dd{u^j_1}\big(\mm(a_0,u_1,u_2,\la)\big)\cdot\mm(a_0,u_1,u_2,\la)^{-1}\cdot
\mm(u_0,u_1,u_2,\la)\cdot\mm(a_0,u_0,u_1,\la)+\\
+\mm(a_0,u_1,u_2,\la)^{-1}\cdot\dd{u^j_1}
\big(\mm(u_0,u_1,u_2,\la)\big)\cdot\mm(a_0,u_0,u_1,\la)+\\
+\mm(a_0,u_1,u_2,\la)^{-1}\cdot\mm(u_0,u_1,u_2,\la)\cdot\dd{u^j_1}\big(\mm(a_0,u_0,u_1,\la)\big).
\end{multline}
From~\er{tilm},~\er{tmlong} we derive
\begin{multline}
\lb{dtmtm}
\dd{u^j_1}
\big(\tilde{\mm}(u_0,u_{1},\la)\big)\cdot\tilde{\mm}(u_0,u_{1},\la)^{-1}=\\
=
\dd{u^j_1}
\big(\tilde{\mm}(u_0,u_{1},\la)\big)\cdot\mm(a_0,u_0,u_1,\la)^{-1}\cdot\mm(u_0,u_1,u_2,\la)^{-1}\cdot\mm(a_0,u_1,u_2,\la)=\\
=-\mm(a_0,u_1,u_2,\la)^{-1}\cdot\dd{u^j_1}\big(\mm(a_0,u_1,u_2,\la)\big)\cdot\mm(a_0,u_1,u_2,\la)^{-1}+\\
+\mm(a_0,u_1,u_2,\la)^{-1}\cdot\left(\dd{u^j_1}
\big(\mm(u_0,u_1,u_2,\la)\big)\cdot\mm(u_0,u_1,u_2,\la)^{-1}+\right.\\
\left.+\mm(u_0,u_1,u_2,\la)\cdot\dd{u^j_1}\big(\mm(a_0,u_0,u_1,\la)\big)\cdot
\mm(a_0,u_0,u_1,\la)^{-1}\cdot\mm(u_0,u_1,u_2,\la)^{-1}\right)\cdot\mm(a_0,u_1,u_2,\la).
\end{multline}
From~\er{dtmtm},~\er{pdm01} it follows that
\begin{multline}
\lb{pdtm}
\forall\, i,j=1,\dots,\diunew\qquad
\dd{u^i_0}\Big(\dd{u^j_1}
\big(\tilde{\mm}(u_0,u_{1},\la)\big)\cdot\tilde{\mm}(u_0,u_{1},\la)^{-1}\Big)=\\
=\mm(a_0,u_1,u_2,\la)^{-1}\cdot\dd{u^i_0}\left(\dd{u^j_1}
\big(\mm(u_0,u_1,u_2,\la)\big)\cdot\mm(u_0,u_1,u_2,\la)^{-1}+\right.\\
\left.+\mm(u_0,u_1,u_2,\la)\cdot\dd{u^j_1}\big(\mm(a_0,u_0,u_1,\la)\big)\cdot
\mm(a_0,u_0,u_1,\la)^{-1}\cdot\mm(u_0,u_1,u_2,\la)^{-1}\right)\cdot\mm(a_0,u_1,u_2,\la)=0.
\end{multline}

Using~\er{tilm} and~\er{guu1}, we see that the matrix-function~\er{hatmm} can be written as
\begin{gather}
\lb{hatmtm}
\hat{\mm}=\tilde{\mm}(\tilde{a}_0,u_1,\la)^{-1}\cdot\tilde{\mm}(u_0,u_{1},\la)\cdot
\tilde{\mm}(\tilde{a}_0,u_0,\la).
\end{gather}
Formula~\er{hatmtm} implies that $\hat{\mm}$ does not depend on~$u_2$.

It remains to show that $\hat{\mm}$ does not depend on~$u_1$.
Let $j\in\{1,\dots,\diunew\}$. 
Using~\er{hatmtm}, we get
\begin{multline}
\lb{pdhatm}
\dd{u^j_1}\big(\hat{\mm}\big)
=\dd{u^j_1}\big(\tilde{\mm}(\tilde{a}_0,u_1,\la)^{-1}\cdot\tilde{\mm}(u_0,u_{1},\la)\big)
\cdot\tilde{\mm}(\tilde{a}_0,u_0,\la)=\\
=\Big(\dd{u^j_1}\big(\tilde{\mm}(\tilde{a}_0,u_1,\la)^{-1}\big)\cdot \tilde{\mm}(u_0,u_1,\la)
+\tilde{\mm}(\tilde{a}_0,u_1,\la)^{-1}\cdot\dd{u^j_1}\big(\tilde{\mm}(u_0,u_1,\la)\big)\Big)
\cdot\tilde{\mm}(\tilde{a}_0,u_0,\la)=\\
=\Big(-\tilde{\mm}(\tilde{a}_0,u_1,\la)^{-1}\cdot\dd{u^j_1}\big(\tilde{\mm}(\tilde{a}_0,u_1,\la)\big)\cdot 
\tilde{\mm}(\tilde{a}_0,u_1,\la)^{-1}\cdot \tilde{\mm}(u_0,u_1,\la)+\\
+\tilde{\mm}(\tilde{a}_0,u_1,\la)^{-1}\cdot\dd{u^j_1}\big(\tilde{\mm}(u_0,u_1,\la)\big)\Big)
\cdot\tilde{\mm}(\tilde{a}_0,u_0,\la)=\\
=\tilde{\mm}(\tilde{a}_0,u_1,\la)^{-1}\cdot\Big(-\dd{u^j_1}\big(\tilde{\mm}(\tilde{a}_0,u_1,\la)\big)\cdot
\tilde{\mm}(\tilde{a}_0,u_1,\la)^{-1}+\\
+\dd{u^j_1}\big(\tilde{\mm}(u_0,u_1,\la)\big)\cdot
\tilde{\mm}(u_0,u_1,\la)^{-1}\Big)\cdot \tilde{\mm}(u_0,u_1,\la)\cdot\tilde{\mm}(\tilde{a}_0,u_0,\la)=\\
=\tilde{\mm}(\tilde{a}_0,u_1,\la)^{-1}\cdot\tilde{Y}^j(u_0,u_1,\la)\cdot 
\tilde{\mm}(u_0,u_1,\la)\cdot\tilde{\mm}(\tilde{a}_0,u_0,\la),
\end{multline}
where 
\begin{gather}
\lb{tLu}
\tilde{Y}^j(u_0,u_1,\la)=-\dd{u^j_1}\big(\tilde{\mm}(\tilde{a}_0,u_1,\la)\big)\cdot
\tilde{\mm}(\tilde{a}_0,u_1,\la)^{-1}+\dd{u^j_1}\big(\tilde{\mm}(u_0,u_1,\la)\big)
\cdot\tilde{\mm}(u_0,u_1,\la)^{-1}.
\end{gather}
From~\er{tLu},~\er{pdtm} it follows that
\begin{gather}
\lb{pdtL}
\dd{u^i_0}\big(\tilde{Y}^j(u_0,u_1,\la)\big)=
\dd{u^i_0}\Big(\dd{u^j_1}
\big(\tilde{\mm}(u_0,u_1,\la)\big)\cdot\tilde{\mm}(u_0,u_1,\la)^{-1}\Big)=0\qquad
\forall\, i=1,\dots,\diunew,\\
\lb{tLa0}
\tilde{Y}^j(\tilde{a}_0,u_1,\la)=0.
\end{gather}

According to Remark~\ref{rratf}, the matrix-function $\tilde{Y}^j(u_0,u_1,\la)$ is rational.
Then equations~\er{pdtL},~\er{tLa0} imply that $\tilde{Y}^j(u_0,u_1,\la)$ is identically zero.
(This is shown similarly to the above explanation of the fact that 
equations~\er{pdLnew},~\er{La0new} imply the identity $Y^j(u_0,u_1,u_2,\la)=0$.)

Substituting $\tilde{Y}^j(u_0,u_1,\la)=0$ in~\er{pdhatm}, one obtains
$\dfrac{\pd}{\pd u^j_1}\big(\hat{\mm}\big)=0$ for all $j=1,\dots,\diunew$.
Thus $\hat{\mm}$ does not depend on~$u_1$.

Now suppose that we have a MLR $(\mm,\,\lt)$ 
with~$\mm=\mm(u_0,u_1,u_2,\la)$ satisfying~\er{pdmnew},~\er{pdm01}.
Applying the gauge transformation~\er{guu1} to this MLR, we get the MLR
\begin{gather*}
\hat{\mm}=\cS\big(\mathbf{g}(u_0,u_{1},\la)\big)\cdot\mm(u_0,u_1,u_2,\la)\cdot\mathbf{g}(u_0,u_{1},\la)^{-1},\\
\hat{\lt}=\tdt\big(\mathbf{g}(u_0,u_{1},\la)\big)\cdot\mathbf{g}(u_0,u_{1},\la)^{-1}+
\mathbf{g}(u_0,u_{1},\la)\cdot\lt\cdot\mathbf{g}(u_0,u_{1},\la)^{-1}.
\end{gather*}
As shown above, $\hat{\mm}$ is of the form $\hat{\mm}=\hat{\mm}(u_0,\la)$.
\end{proof}

\begin{theorem}
\label{thtrmlp}
Consider a MLR\/ 
$(\mm,\,\lt)$ with $\cS$-part of the form $\mm=\mm(u_0,u_1,u_2,\la)$.
This MLR is gauge equivalent to a trivial MLR if and only if\/ 
$\mm$ satisfies
\begin{gather}
\lb{ctrmlp}
\forall\,j=1,\dots,\diunew\qquad 
\dd{u^j_2}(\mm)\cdot \mm^{-1}=
-\cS\Big(\mm^{-1}\cdot\Big(\dd{u^j_1}(\mm)\cdot\mm^{-1}+
\cS\Big(\mm^{-1}\cdot\dd{u^j_0}(\mm)\Big)\Big)\cdot \mm\Big).
\end{gather}
\end{theorem}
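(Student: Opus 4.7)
My plan is to prove both directions by studying the auxiliary matrices $R_\ell^j := \dd{u^j_\ell}(\mathbf{g})\cdot\mathbf{g}^{-1}$ for a (hypothetical or to-be-constructed) gauge transformation $\mathbf{g}$ realizing a trivialization of $\mm$.

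For the ``only if'' direction, assume $\mm=\cS(\mathbf{g})\cdot\mm_0(\la)\cdot\mathbf{g}^{-1}$ for some $\mathbf{g}=\mathbf{g}(\ud,\la)$ and $\mm_0(\la)$. Differentiating yields the master identity $\dd{u^j_\ell}(\mm)\cdot\mm^{-1}=\cS(R_{\ell-1}^j)-\mm\cdot R_\ell^j\cdot\mm^{-1}$ for every $j,\ell$. Because $\mm$ depends only on $u_0,u_1,u_2$, the left-hand side vanishes for $\ell\notin\{0,1,2\}$, giving the recursion $R_\ell^j=\mm^{-1}\cdot\cS(R_{\ell-1}^j)\cdot\mm$. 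Since $\mathbf{g}$ is rational and depends on only finitely many dynamical variables, induction on $|\ell|$ starting from the vanishing of $R_\ell^j$ for $|\ell|$ sufficiently large forces $R_\ell^j=0$ for every $\ell\notin\{0,1\}$; in particular $\mathbf{g}=\mathbf{g}(u_0,u_1,\la)$ and $R_2^j=0$. Solving the three surviving equations (at $\ell=0,1,2$) gives $R_0^j=-\mm^{-1}\cdot\dd{u^j_0}(\mm)$, an explicit formula for $R_1^j$, and the reduced identity $\dd{u^j_2}(\mm)\cdot\mm^{-1}=\cS(R_1^j)$; substitution produces exactly \eqref{ctrmlp}.

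For the ``if'' direction, I would use Theorem~\ref{thmuknew} as a stepping stone. Since the LHS of \eqref{ctrmlp} depends only on $u_0,u_1,u_2$ while its RHS, after the outer $\cS$, a priori depends on $u_1,u_2,u_3,u_4$, equality of rational functions forces both sides to depend only on $u_1,u_2$; in particular $\dd{u^i_0}(\dd{u^j_2}(\mm)\cdot\mm^{-1})=0$, which is \eqref{pdmnew}. A finer dependence analysis of the inner bracketed expression, combined with the rationality argument from Remark~\ref{rchv}, produces \eqref{pdm01} for an appropriate $a_0\in\fik^\diunew$. Theorem~\ref{thmuknew} then supplies an explicit gauge transformation sending $(\mm,\,\lt)$ to $(\hat{\mm},\,\hat{\lt})$ with $\hat{\mm}=\hat{\mm}(u_0,\la)$. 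Verifying that \eqref{ctrmlp} is gauge-invariant (or checking directly that the transformed $\hat{\mm}$ satisfies it), one specializes \eqref{ctrmlp} to $\mm=\hat{\mm}(u_0,\la)$: the LHS vanishes, while the RHS collapses via $\dd{u^j_1}(\hat{\mm})=0$ to $-\hat{\mm}(u_1,\la)^{-1}\cdot\hat{\mm}(u_2,\la)^{-1}\cdot\dd{u^j_2}(\hat{\mm}(u_2,\la))\cdot\hat{\mm}(u_1,\la)$. Setting this to zero yields $\dd{u^j_0}(\hat{\mm})=0$ for all $j$, so $\hat{\mm}=\hat{\mm}(\la)$ is already trivial.

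The main obstacle is extracting \eqref{pdm01} from \eqref{ctrmlp}: the latter is intrinsic while the former involves a chosen generic vector $a_0$. The bridge should be to substitute $u_0=a_0$ in a suitably rearranged form of \eqref{ctrmlp} and invoke Remark~\ref{rchv} to lift vanishing on an open dense set to a polynomial identity. A secondary technical point is the gauge-invariance of \eqref{ctrmlp} invoked at the end; this amounts to a tracked direct computation of how the three matrix-derivative terms transform under the conjugation-plus-shift action $\mm\mapsto\cS(\mathbf{g})\cdot\mm\cdot\mathbf{g}^{-1}$, of the same algebraic character as the manipulations carried out in the proof of Theorem~\ref{thmuknew}.
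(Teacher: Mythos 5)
Your proposal is correct and follows essentially the same route as the paper: in the ``only if'' direction you differentiate the trivializing relation and eliminate the gauge matrix's dependence on all variables except $u_0,u_1$ (your recursion-and-induction argument in fact supplies a justification for a step the paper merely asserts), and in the ``if'' direction you extract \eqref{pdmnew}--\eqref{pdm01} from \eqref{ctrmlp} by exactly the paper's device (substituting $u_0=a_0$ into a shifted and rearranged form of \eqref{ctrmlp}) and then invoke Theorem~\ref{thmuknew} before forcing $\dd{u^j_0}(\hat{\mm})=0$. One caution: do not lean on ``gauge invariance of \eqref{ctrmlp}'' to transfer the condition to $\hat{\mm}$, since as stated that invariance is equivalent to the theorem itself and would be circular unless verified by an independent computation; your fallback option --- substituting $\mm=\cS(\mathbf{g}^{-1})\cdot\hat{\mm}\cdot\mathbf{g}$ into \eqref{ctrmlp} and simplifying until $\dd{u^j_0}(\hat{\mm})=0$ drops out --- is precisely what the paper does and is the clean way to close that step.
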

\begin{proof}
Suppose that a MLR $(\mm,\,\lt)$ with $\cS$-part $\mm=\mm(u_0,u_1,u_2,\la)$ 
is gauge equivalent to a trivial MLR. This means that there is a 
gauge transformation $\mathbf{g}=\mathbf{g}(\ud,\la)$ such that the matrix
\begin{gather}
\lb{hatmp}
\hat{\mm}=\cS(\mathbf{g})\cdot\mm(u_0,u_1,u_2,\la)\cdot\mathbf{g}^{-1}
\end{gather}
obeys 
\begin{gather}
\lb{dujhm}
\dd{u^j_\ell}(\hat{\mm})=0\qquad\quad\forall\,j,\ell.
\end{gather}
Relations~\er{hatmp},~\er{dujhm} imply that the matrix $\mathbf{g}$ 
may depend only on $u_0,u_1,\la$. That is, 
\begin{gather}
\lb{gguu}
\mathbf{g}=\mathbf{g}(u_0,u_1,\la).
\end{gather}

From~\er{hatmp} we get 
\begin{gather}
\lb{mmghm}
\mm=\cS(\mathbf{g}^{-1})\cdot\hat{\mm}\cdot\mathbf{g}.
\end{gather}
Using~\er{dujhm},~\er{gguu},~\er{mmghm}, for each $j=1,\dots,\diunew$ one obtains
\begin{gather}
\lb{duj2}
\dd{u^j_2}(\mm)=\cS\big(\dd{u^j_1}(\mathbf{g}^{-1})\big)\cdot\hat{\mm}\cdot\mathbf{g}=
-\cS(\mathbf{g}^{-1})\cdot\cS\big(\dd{u^j_1}(\mathbf{g})\big)
\cdot\cS(\mathbf{g}^{-1})\cdot\hat{\mm}\cdot\mathbf{g},\\
\lb{duj1}
\dd{u^j_1}(\mm)=-\cS(\mathbf{g}^{-1})\cdot\cS\big(\dd{u^j_0}(\mathbf{g})\big)
\cdot\cS(\mathbf{g}^{-1})\cdot\hat{\mm}\cdot\mathbf{g}+
\cS(\mathbf{g}^{-1})\cdot\hat{\mm}\cdot\dd{u^j_1}(\mathbf{g}),\\
\lb{duj0}
\dd{u^j_0}(\mm)=\cS(\mathbf{g}^{-1})\cdot\hat{\mm}\cdot\dd{u^j_0}(\mathbf{g}).
\end{gather}
From \er{mmghm}--\er{duj0} it follows that
\begin{gather}
\lb{uj2m}
\dd{u^j_2}(\mm)\cdot \mm^{-1}=-\cS(\mathbf{g}^{-1})\cdot\cS\big(\dd{u^j_1}(\mathbf{g})\big)=\cS\big(-\mathbf{g}^{-1}\cdot\dd{u^j_1}(\mathbf{g})\big),\\
\lb{muj0m}
\mm^{-1}\cdot\dd{u^j_0}(\mm)=\mathbf{g}^{-1}\cdot\dd{u^j_0}(\mathbf{g}),\\
\lb{muj1m}
\mm^{-1}\cdot\dd{u^j_1}(\mm)=-\mathbf{g}^{-1}\cdot\hat{\mm}^{-1}\cdot\cS\big(\dd{u^j_0}(\mathbf{g})\big)
\cdot\cS(\mathbf{g}^{-1})\cdot\hat{\mm}\cdot\mathbf{g}+
\mathbf{g}^{-1}\cdot\dd{u^j_1}(\mathbf{g}).
\end{gather}
Using~\er{mmghm}, \er{uj2m}, \er{muj0m}, \er{muj1m}, one gets
\begin{multline}
\lb{mmbdd}
\mm^{-1}\cdot\Big(\dd{u^j_1}(\mm)\cdot\mm^{-1}+
\cS\Big(\mm^{-1}\cdot\dd{u^j_0}(\mm)\Big)\Big)\cdot \mm=\\
=\mm^{-1}\cdot\dd{u^j_1}(\mm)+
\mm^{-1}\cdot\cS\Big(\mm^{-1}\cdot\dd{u^j_0}(\mm)\Big)\cdot \mm=\\
=-\mathbf{g}^{-1}\cdot\hat{\mm}^{-1}\cdot\cS\big(\dd{u^j_0}(\mathbf{g})\big)
\cdot\cS(\mathbf{g}^{-1})\cdot\hat{\mm}\cdot\mathbf{g}
+\mathbf{g}^{-1}\cdot\dd{u^j_1}(\mathbf{g})
+\mathbf{g}^{-1}\cdot\hat{\mm}^{-1}\cdot\cS(\mathbf{g})\cdot
\cS\Big(\mathbf{g}^{-1}\cdot\dd{u^j_0}(\mathbf{g})\Big)\cdot
\cS(\mathbf{g}^{-1})\cdot\hat{\mm}\cdot\mathbf{g}.
\end{multline}
Since 
\begin{multline*}
\mathbf{g}^{-1}\cdot\hat{\mm}^{-1}\cdot\cS(\mathbf{g})\cdot
\cS\Big(\mathbf{g}^{-1}\cdot\dd{u^j_0}(\mathbf{g})\Big)\cdot
\cS(\mathbf{g}^{-1})\cdot\hat{\mm}\cdot\mathbf{g}=\\
=\mathbf{g}^{-1}\cdot\hat{\mm}^{-1}\cdot\cS(\mathbf{g})\cdot
\cS(\mathbf{g}^{-1})\cdot\cS\big(\dd{u^j_0}(\mathbf{g})\big)
\cdot\cS(\mathbf{g}^{-1})\cdot\hat{\mm}\cdot\mathbf{g}=
\mathbf{g}^{-1}\cdot\hat{\mm}^{-1}\cdot\cS\big(\dd{u^j_0}(\mathbf{g})\big)
\cdot\cS(\mathbf{g}^{-1})\cdot\hat{\mm}\cdot\mathbf{g},
\end{multline*}
from~\er{mmbdd} we derive
\begin{gather}
\lb{mmdds}
\mm^{-1}\cdot\Big(\dd{u^j_1}(\mm)\cdot\mm^{-1}+
\cS\Big(\mm^{-1}\cdot\dd{u^j_0}(\mm)\Big)\Big)\cdot \mm=\mathbf{g}^{-1}\cdot\dd{u^j_1}(\mathbf{g}).
\end{gather}
Then equations~\er{uj2m},~\er{mmdds} imply that $\mm$ satisfies~\er{ctrmlp}.

Now, conversely, consider 
a MLR $(\mm,\,\lt)$ with $\cS$-part $\mm=\mm(u_0,u_1,u_2,\la)$ 
satisfying~\er{ctrmlp}. We need to prove that 
this MLR is gauge equivalent to a trivial MLR. 
To construct a suitable gauge transformation, we are going to 
show that $\mm$ obeys the conditions of Theorem~\ref{thmuknew}.

Since $\mm$ does not depend on~$u_\ell$ for $\ell<0$, 
the right-hand side of~\er{ctrmlp} does not depend on~$u_0$.
Then \er{ctrmlp} implies that $\mm$ satisfies~\er{pdmnew}.
Using the rules~\er{csuf}, we see that equation~\er{ctrmlp} reads
\begin{multline}
\lb{ctrmlpe}
\dd{u^j_2}\big(\mm(u_0,u_1,u_2,\la)\big)\cdot \mm(u_0,u_1,u_2,\la)^{-1}=\\
=-\mm(u_1,u_2,u_3,\la)^{-1}\cdot\Big(\dd{u^j_2}\big(\mm(u_1,u_2,u_3,\la)\big)
\cdot\mm(u_1,u_2,u_3,\la)^{-1}+\\
+\mm(u_2,u_3,u_4,\la)^{-1}\cdot\dd{u^j_2}\big(\mm(u_2,u_3,u_4,\la)\big)\Big)\cdot \mm(u_1,u_2,u_3,\la).
\end{multline}
Now we take a constant vector $a_0\in\fik^\diunew$, substitute $u_0=a_0$ in~\er{ctrmlpe}, 
and after that we apply the operator~$\cS^{-1}$ to equation~\er{ctrmlpe}. 
This gives the following
\begin{multline}
\lb{ctrmlps}
\dd{u^j_1}\big(\mm(a_0,u_0,u_1,\la)\big)\cdot \mm(a_0,u_0,u_1,\la)^{-1}=\\
=-\mm(u_0,u_1,u_2,\la)^{-1}\cdot\Big(\dd{u^j_1}\big(\mm(u_0,u_1,u_2,\la)\big)
\cdot\mm(u_0,u_1,u_2,\la)^{-1}+\\
+\mm(u_1,u_2,u_3,\la)^{-1}\cdot\dd{u^j_1}\big(\mm(u_1,u_2,u_3,\la)\big)\Big)\cdot \mm(u_0,u_1,u_2,\la),
\end{multline}
where $\mm(a_0,u_0,u_1,\la)=\cS^{-1}\big(\mm(a_0,u_1,u_2,\la)\big)$.
Now we multiply~\er{ctrmlps} by $\mm(u_0,u_1,u_2,\la)$ from the left 
and by $\mm(u_0,u_1,u_2,\la)^{-1}$ from the right.
This gives the equation 
\begin{multline}
\lb{ctrmlpm}
\mm(u_0,u_1,u_2,\la)\cdot
\dd{u^j_1}\big(\mm(a_0,u_0,u_1,\la)\big)\cdot \mm(a_0,u_0,u_1,\la)^{-1}
\cdot\mm(u_0,u_1,u_2,\la)^{-1}=\\
=-\dd{u^j_1}\big(\mm(u_0,u_1,u_2,\la)\big)\cdot\mm(u_0,u_1,u_2,\la)^{-1}
+\mm(u_1,u_2,u_3,\la)^{-1}\cdot\dd{u^j_1}\big(\mm(u_1,u_2,u_3,\la)\big).
\end{multline}
Adding $\dd{u^j_1}\big(\mm(u_0,u_1,u_2,\la)\big)\cdot\mm(u_0,u_1,u_2,\la)^{-1}$ 
to the both sides of equation~\er{ctrmlpm}, one obtains
\begin{multline}
\lb{ctrmlpa}
\dd{u^j_1}\big(\mm(u_0,u_1,u_2,\la)\big)\cdot\mm(u_0,u_1,u_2,\la)^{-1}+\\
+\mm(u_0,u_1,u_2,\la)\cdot
\dd{u^j_1}\big(\mm(a_0,u_0,u_1,\la)\big)\cdot \mm(a_0,u_0,u_1,\la)^{-1}
\cdot\mm(u_0,u_1,u_2,\la)^{-1}=\\
=\mm(u_1,u_2,u_3,\la)^{-1}\cdot\dd{u^j_1}\big(\mm(u_1,u_2,u_3,\la)\big).
\end{multline}
Equation~\er{ctrmlpa} implies that $\mm$ obeys~\er{pdm01}.

Then, by Theorem~\ref{thmuknew}, 
there is a gauge transformation $\mathbf{g}(u_0,u_{1},\la)$ such that the matrix-function 
\begin{gather}
\lb{chmm}
\check{\mm}=
\cS\big(\mathbf{g}(u_0,u_{1},\la)\big)\cdot\mm(u_0,u_1,u_2,\la)\cdot\mathbf{g}(u_0,u_{1},\la)^{-1}
\end{gather}
does not depend on~$u_1,\,u_2$. 
Thus $\check{\mm}$ is of the form $\check{\mm}=\check{\mm}(u_0,\la)$.
From~\er{chmm} we obtain 
\begin{gather}
\lb{mgchm}
\mm=\cS\big(\mathbf{g}(u_0,u_{1},\la)^{-1}\big)\cdot
\check{\mm}(u_0,\la)\cdot\mathbf{g}(u_0,u_{1},\la).
\end{gather}
Using~\er{mgchm}, for each $j=1,\dots,\diunew$ one gets
\begin{gather}
\lb{nuj2mch}
\dd{u^j_2}(\mm)\cdot \mm^{-1}=-\cS(\mathbf{g}^{-1})\cdot\cS\big(\dd{u^j_1}(\mathbf{g})\big)=\cS\big(-\mathbf{g}^{-1}\cdot\dd{u^j_1}(\mathbf{g})\big),\\
\lb{nmuj1m}
\mm^{-1}\cdot\dd{u^j_1}(\mm)=
-\mathbf{g}^{-1}\cdot\check{\mm}^{-1}\cdot\cS\big(\dd{u^j_0}(\mathbf{g})\big)
\cdot\cS(\mathbf{g}^{-1})\cdot\check{\mm}\cdot\mathbf{g}+
\mathbf{g}^{-1}\cdot\dd{u^j_1}(\mathbf{g}),\\
\lb{nmuj0m}
\mm^{-1}\cdot\dd{u^j_0}(\mm)=
\mathbf{g}^{-1}\cdot\check{\mm}^{-1}\cdot
\dd{u^j_0}(\check{\mm})\cdot\mathbf{g}
+\mathbf{g}^{-1}\cdot\dd{u^j_0}(\mathbf{g}).
\end{gather}
Applying~ the operator $-\cS^{-1}$ to equation~\er{ctrmlp} 
and using~\er{mgchm}--\er{nmuj0m}, we derive
\begin{multline}
\lb{longm}
\mathbf{g}^{-1}\cdot\dd{u^j_1}(\mathbf{g})=
\mm^{-1}\cdot\Big(\dd{u^j_1}(\mm)\cdot\mm^{-1}+
\cS\Big(\mm^{-1}\cdot\dd{u^j_0}(\mm)\Big)\Big)\cdot \mm=\\
=\mm^{-1}\cdot\dd{u^j_1}(\mm)+
\mm^{-1}\cdot\cS\Big(\mathbf{g}^{-1}\cdot\check{\mm}^{-1}\cdot
\dd{u^j_0}(\check{\mm})\cdot\mathbf{g}
+\mathbf{g}^{-1}\cdot\dd{u^j_0}(\mathbf{g})\Big)\cdot \mm=\\
=-\mathbf{g}^{-1}\cdot\check{\mm}^{-1}\cdot\cS\big(\dd{u^j_0}(\mathbf{g})\big)
\cdot\cS(\mathbf{g}^{-1})\cdot\check{\mm}\cdot\mathbf{g}+
\mathbf{g}^{-1}\cdot\dd{u^j_1}(\mathbf{g})+\\
+\mm^{-1}\cdot\cS\Big(\mathbf{g}^{-1}\cdot\check{\mm}^{-1}\cdot
\dd{u^j_0}(\check{\mm})\cdot\mathbf{g}\Big)\cdot \mm+
\mm^{-1}\cdot\cS\big(\mathbf{g}^{-1}\big)\cdot\cS\big(\dd{u^j_0}(\mathbf{g})\big)\cdot\mm=\\
=\mathbf{g}^{-1}\cdot\dd{u^j_1}(\mathbf{g})
+\mm^{-1}\cdot\cS\Big(\mathbf{g}^{-1}\cdot\check{\mm}^{-1}\cdot
\dd{u^j_0}(\check{\mm})\cdot\mathbf{g}\Big)\cdot \mm.
\end{multline}
From~\er{longm} one obtains the equation 
$\mm^{-1}\cdot\cS\Big(\mathbf{g}^{-1}\cdot\check{\mm}^{-1}\cdot
\dd{u^j_0}(\check{\mm})\cdot\mathbf{g}\Big)\cdot \mm=0$, 
which implies $\dd{u^j_0}(\check{\mm})=0$.

Thus we have~\er{chmm}, where $\dd{u^j_\ell}(\check{\mm})=0$ for all $j,\,\ell$.
Hence the MLR~$(\mm,\,\lt)$ is gauge equivalent to a trivial MLR. 
\end{proof}

Theorem~\ref{thmuk1} below can be deduced from~\cite[Theorem 1]{IgSimpl2024}, 
but for completeness we present a proof for it.

\begin{theorem}
\label{thmuk1}
Let $\sm\in\zsp$. Consider an invertible $\sm\times\sm$ matrix-function~$\mm=\mm(u_0,u_1,\la)$. 
Suppose that 
\begin{gather}
\lb{pdm1}
\forall\, i,j=1,\dots,\diunew\qquad\quad
\dd{u^i_0}\Big(\dd{u^j_1}
\big(\mm(u_0,u_1,\la)\big)\cdot \mm(u_0,u_1,\la)^{-1}\Big)=0.
\end{gather}
Consider the gauge transformation
\begin{gather}
\lb{guu0}
\mathbf{g}(u_0,\la)=
\cS^{-1}\big(\mm(a_0,u_1,\la)^{-1}\big),
\end{gather}
where $a_0\in\fik^\diunew$ is a constant vector and the matrix\/ 
$\mm(a_0,u_1,\la)$ is obtained from~$\mm(u_0,u_1,\la)$ by substituting $u_0=a_0$.
Here a vector $a_0\in\fik^\diunew$ is chosen so that the matrix\/ 
$\mm(a_0,u_1,\la)$ is well defined and is invertible.
Remark~\textup{\ref{rchv}} explains how to choose such a vector.

Condition~\er{pdm1} implies that the matrix-function 
\begin{gather}
\lb{hatmm1}
\hat{\mm}=\cS\big(\mathbf{g}(u_0,\la)\big)\cdot\mm(u_0,u_1,\la)\cdot\mathbf{g}(u_0,\la)^{-1}
\end{gather}
does not depend on~$u_1$.
Thus $\hat{\mm}$ is of the form $\hat{\mm}=\hat{\mm}(u_0,\la)$. 

Now suppose that we have a MLR $(\mm,\,\lt)$ with~$\mm=\mm(u_0,u_1,\la)$.
Condition~\er{pdm1} is sufficient for the possibility to transform 
\textup{(}by means of a gauge transformation\textup{)} the MLR $(\mm,\,\lt)$ to a MLR 
$(\hat{\mm},\,\hat{\lt})$ with $\cS$-part of the form $\hat{\mm}=\hat{\mm}(u_0,\la)$.
One can compute $(\hat{\mm},\,\hat{\lt})$ by means of~\er{hatmm1} and 
\begin{gather}
\lb{hatlt1}
\hat{\lt}=\tdt\big(\mathbf{g}(u_0,\la)\big)\cdot\mathbf{g}(u_0,\la)^{-1}+
\mathbf{g}(u_0,\la)\cdot\lt\cdot\mathbf{g}(u_0,\la)^{-1},
\end{gather}
where the gauge transformation $\mathbf{g}(u_0,\la)$ is given by~\er{guu0}.
\end{theorem}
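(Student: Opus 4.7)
The plan is to follow the same strategy used in the proof of Theorem~\ref{thmuknew}, in a substantially simpler setting, since here $\mm$ depends only on $u_0$ and $u_1$; consequently the two successive reductions of that proof collapse into a single one. I would also exploit the fact that the gauge transformation~\er{guu0} is explicitly designed so that the conjugation $\cS(\mathbf{g})\cdot\mm\cdot\mathbf{g}^{-1}$ takes a very transparent form.

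First, I would make the gauge transformation explicit. From $\mathbf{g}(u_0,\la) = \cS^{-1}\big(\mm(a_0,u_1,\la)^{-1}\big)$ together with the shift rules~\er{csuf}, one obtains $\cS\big(\mathbf{g}(u_0,\la)\big) = \mm(a_0,u_1,\la)^{-1}$ and $\mathbf{g}(u_0,\la)^{-1} = \mm(a_0,u_0,\la)$, so formula~\er{hatmm1} simplifies to $\hat{\mm} = \mm(a_0,u_1,\la)^{-1}\cdot\mm(u_0,u_1,\la)\cdot\mm(a_0,u_0,\la)$. Since the third factor has no $u_1$ dependence, the claim ``$\hat{\mm}$ does not depend on $u_1$'' reduces to showing that the product $\mm(a_0,u_1,\la)^{-1}\cdot\mm(u_0,u_1,\la)$ is independent of $u_1$.

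Next, I would apply the Leibniz rule together with the standard identity $\dd{u^j_1}\big(\mm^{-1}\big) = -\mm^{-1}\cdot\dd{u^j_1}(\mm)\cdot\mm^{-1}$, and then factor $\mm(a_0,u_1,\la)^{-1}$ on the left and $\mm(u_0,u_1,\la)$ on the right. This should yield, for each $j=1,\dots,\diunew$, an expression of the form $\dd{u^j_1}(\hat{\mm}) = \mm(a_0,u_1,\la)^{-1}\cdot Y^j(u_0,u_1,\la)\cdot\mm(u_0,u_1,\la)\cdot\mm(a_0,u_0,\la)$, where $Y^j = \dd{u^j_1}\big(\mm(u_0,u_1,\la)\big)\cdot\mm(u_0,u_1,\la)^{-1} - \dd{u^j_1}\big(\mm(a_0,u_1,\la)\big)\cdot\mm(a_0,u_1,\la)^{-1}$. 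This is precisely the analog of the quantity~$Y^j$ appearing in~\er{Lunew} during the proof of Theorem~\ref{thmuknew}.

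Hypothesis~\er{pdm1} says precisely that the first summand of $Y^j$ has no $u_0$ dependence, hence $\dd{u^i_0}(Y^j)=0$ for all $i$. On the other hand $Y^j(a_0,u_1,\la)=0$ by direct inspection. Since $Y^j$ is a rational matrix-function (Remark~\ref{rratf}), these two facts force $Y^j\equiv 0$ by exactly the reasoning used after~\er{La0new}. Therefore $\dd{u^j_1}(\hat{\mm})=0$ for every $j$, i.e.\ $\hat{\mm}=\hat{\mm}(u_0,\la)$. The remaining MLR assertion then follows at once by applying the gauge-action formulas~\er{ggtmtu} to $(\mm,\lt)$, which produce $\hat{\mm}$ as above and $\hat{\lt}$ as in~\er{hatlt1}. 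There is essentially no genuine obstacle: the only mildly delicate moment is the rationality step promoting ``$\dd{u^i_0}(Y^j)=0$ and $Y^j|_{u_0=a_0}=0$'' into ``$Y^j\equiv 0$'', and this is already worked out in the proof of Theorem~\ref{thmuknew}.
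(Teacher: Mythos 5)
Your proposal is correct and follows essentially the same route as the paper's proof: the explicit form $\hat{\mm}=\mm(a_0,u_1,\la)^{-1}\cdot\mm(u_0,u_1,\la)\cdot\mm(a_0,u_0,\la)$, the Leibniz-rule computation producing your $Y^j$ (the paper's $Q^j$ in~\er{Lunew1}), and the rationality argument combining $\dd{u^i_0}(Y^j)=0$ with $Y^j|_{u_0=a_0}=0$ to conclude $Y^j\equiv 0$, followed by the gauge-action formulas~\er{ggtmtu} for the MLR claim. No gaps to report.
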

\begin{proof}
Substituting~\er{guu0} in~\er{hatmm1}, one gets
\begin{gather}
\lb{hatmms}
\hat{\mm}=\cS\big(\mathbf{g}(u_0,\la)\big)\cdot\mm(u_0,u_1,\la)\cdot\mathbf{g}(u_0,\la)^{-1}=
\mm(a_0,u_1,\la)^{-1}\cdot\mm(u_0,u_1,\la)\cdot\mm(a_0,u_0,\la),
\end{gather}
where $\mm(a_0,u_0,\la)=\cS^{-1}\big(\mm(a_0,u_1,\la)\big)$.
Let $j\in\{1,\dots,\diunew\}$. Using~\er{hatmms}, we obtain
\begin{multline}
\lb{pdtilmn1}
\dd{u^j_1}\big(\hat{\mm}\big)
=\dd{u^j_1}\big(\mm(a_0,u_1,\la)^{-1}\cdot\mm(u_0,u_1,\la)\big)\cdot\mm(a_0,u_0,\la)=\\
=\Big(\dd{u^j_1}\big(\mm(a_0,u_1,\la)^{-1}\big)\cdot\mm(u_0,u_1,\la)
+\mm(a_0,u_1,\la)^{-1}\cdot\dd{u^j_1}\big(\mm(u_0,u_1,\la)\big)\Big)
\cdot\mm(a_0,u_0,\la)=\\
=\Big(-\mm(a_0,u_1,\la)^{-1}\cdot\dd{u^j_1}\big(\mm(a_0,u_1,\la)\big)\cdot 
\mm(a_0,u_1,\la)^{-1}\cdot\mm(u_0,u_1,\la)+\\
+\mm(a_0,u_1,\la)^{-1}\cdot\dd{u^j_1}\big(\mm(u_0,u_1,\la)\big)\Big)
\cdot\mm(a_0,u_0,\la)=\\
=\mm(a_0,u_1,\la)^{-1}\cdot\Big(-\dd{u^j_1}\big(\mm(a_0,u_1,\la)\big)\cdot
\mm(a_0,u_1,\la)^{-1}+\\
+\dd{u^j_1}\big(\mm(u_0,u_1,\la)\big)\cdot
\mm(u_0,u_1,\la)^{-1}\Big)\cdot\mm(u_0,u_1,\la)\cdot\mm(a_0,u_0,\la)=\\
=\mm(a_0,u_1,\la)^{-1}\cdot Q^j(u_0,u_1,\la)\cdot \mm(u_0,u_1,\la)\cdot\mm(a_0,u_0,\la),
\end{multline}
where 
\begin{gather}
\lb{Lunew1}
Q^j(u_0,u_1,\la)=-\dd{u^j_1}\big(\mm(a_0,u_1,\la)\big)\cdot
\mm(a_0,u_1,\la)^{-1}+\dd{u^j_1}\big(\mm(u_0,u_1,\la)\big)\cdot\mm(u_0,u_1,\la)^{-1}.
\end{gather}
From~\er{Lunew1} and~\er{pdm1} it follows that
\begin{gather}
\lb{pdLnew1}
\dd{u^i_0}\big(Q^j(u_0,u_1,\la)\big)=
\dd{u^i_0}\Big(\dd{u^j_1}
\big(\mm(u_0,u_1,\la)\big)\cdot \mm(u_0,u_1,\la)^{-1}\Big)=0\qquad
\forall\, i=1,\dots,\diunew,\\
\lb{La0new1}
Q^j(a_0,u_1,\la)=0.
\end{gather}
Since, according to Remark~\ref{rratf}, the matrix-function $Q^j(u_0,u_1,\la)$ is rational,
equations~\er{pdLnew1},~\er{La0new1} imply that $Q^j(u_0,u_1,\la)$ is identically zero.
(From~\er{pdLnew1},~\er{La0new1} it follows that the matrix-function $Q^j(u_0,u_1,\la)$ 
is zero on some non-empty open subset of the space with coordinates~$u_0,u_1,\la$.
Since $Q^j(u_0,u_1,\la)$ is rational, this implies that $Q^j(u_0,u_1,\la)$ is identically zero.)

Substituting $Q^j(u_0,u_1,\la)=0$ in~\er{pdtilmn1}, one obtains
$\dd{u^j_1}\big(\hat{\mm}\big)=0$ for all $j=1,\dots,\diunew$.
Therefore, the matrix-function~\er{hatmm1} is of the form $\hat{\mm}=\hat{\mm}(u_0,\la)$. 

Now suppose that we have a MLR $(\mm,\,\lt)$ 
with~$\mm=\mm(u_0,u_1,\la)$ satisfying~\er{pdm1}.
Applying the gauge transformation~\er{guu0} to this MLR, we get the MLR
\begin{gather*}
\hat{\mm}=\cS\big(\mathbf{g}(u_0,\la)\big)\cdot\mm(u_0,u_1,\la)\cdot\mathbf{g}(u_0,\la)^{-1},\qquad
\hat{\lt}=\tdt\big(\mathbf{g}(u_0,\la)\big)\cdot\mathbf{g}(u_0,\la)^{-1}+
\mathbf{g}(u_0,\la)\cdot\lt\cdot\mathbf{g}(u_0,\la)^{-1}.
\end{gather*}
As shown above, $\hat{\mm}$ is of the form $\hat{\mm}=\hat{\mm}(u_0,\la)$.
\end{proof}

\section{Lax representations and Miura-type transformations 
for integrable equations related to the Narita--Itoh--Bogoyavlensky lattice}
\lb{sinb}

In this section we have $\diunew=1$. 
Consider the Narita--Itoh--Bogoyavlensky (NIB) equation~\cite{bogoy88,narita82,itoh87}
\begin{gather}
\lb{be}
\pd_t(\inb)=\inb(\inb_2+\inb_1-\inb_{-1}-\inb_{-2})
\end{gather}
for a scalar function $\inb=\inb(n,t)$.
It is known that equation~\er{be} possesses the MLR
\begin{gather}
\lb{bmlrbc}
\mm^{\text{\tiny{NIB}}}=\begin{pmatrix}
 0& 1& 0\\
 0& 0& 1\\
 -\inb_{0}& 0& \la\\
\end{pmatrix},\qquad
\lt^{\text{\tiny{NIB}}}=\begin{pmatrix}
\inb_{-2} + \inb_{-1}& 0& \la \\
-\la \inb_{0}& \inb_{-1} + \inb_{0}& \la^2\\
-\la^2 \inb_{0}& -\la \inb_{1}& \la^3 + \inb_{0} + \inb_{1}\\
\end{pmatrix}.
\end{gather}

Fix a constant $\mc\in\mathbb{C}$.
Consider the modified NIB equation
\begin{gather}
\label{bpeq}
\pd_t(\pe)=\pe(\pe+\mc)(\pe_2\pe_1 - \pe_{-1}\pe_{-2})
\end{gather}
for a scalar function $\pe=\pe(n,t)$.
It is known (see, e.g.,~\cite{adler-pos14,suris2003}) that the formula
\begin{gather}
\lb{bmtpbc}
\inb=(\pe_2+\mc)\pe_1\pe_0
\end{gather}
determines a MT from~\er{bpeq} to~\er{be}.
Substituting~\er{bmtpbc} in~\er{bmlrbc}, we get the following MLR for equation~\er{bpeq}
\begin{gather}
\lb{bMbce}
\mm(\pe_{0},\pe_{1},\pe_{2},\la)=
\begin{pmatrix}
 0& 1& 0\\
 0& 0& 1\\
 -\pe_0\pe_1(\pe_2+\mc)& 0& \la\\
\end{pmatrix},\\
\lb{bUbce}
\begin{split}
\lt(\pe_{-2}&,\pe_{-1},\pe_{0},\pe_{1},\pe_{2},\pe_{3},\la)=\\
=&\left(\begin{smallmatrix}
 \pe_{-1}\big(\mc(\pe_{-2}+\pe_{0})+\pe_{0} \pe_{-2}+\pe_{0} \pe_{1}\big) & 0 & \la \\
 -\la \pe_{0} \pe_{1} (\pe_{2}+\mc) & 
\pe_{0}\big(\mc(\pe_{-1}+\pe_{1})+\pe_{1}\pe_{-1}+\pe_{1}\pe_{2}\big) & \la^2 \\
 -\la^2\pe_{0}\pe_{1}(\pe_{2}+\mc) & -\la\pe_{1}\pe_{2}(\pe_{3}+\mc) & 
\mc(\pe_{0}\pe_{1}+\pe_{1}\pe_{2})+\pe_{0}\pe_{1}\pe_{2}+\pe_{1}\pe_{2}\pe_{3}+\la^3 \\
\end{smallmatrix}\right).
\end{split}
\end{gather}

The matrix-function~\er{bMbce} satisfies conditions~\er{pdmnew},~\er{pdm01}.
Therefore, we can apply Theorem~\ref{thmuknew} to the MLR~\er{bMbce},~\er{bUbce}
Recall that here $\diunew=1$.
In order to apply Theorem~\ref{thmuknew} in this case, we need to
\begin{itemize}
	\item choose a constant $a_0\in\fik^1$,
	\item substitute $\pe_0=a_0$ in~$\mm(\pe_{0},\pe_{1},\pe_{2},\la)$
given by~\er{bMbce},
	\item consider the matrix-function $\mm(a_0,\pe_0,\pe_1,\la)=\cS^{-1}\big(\mm(a_0,\pe_1,\pe_2,\la)\big)$,
	\item compute $\tilde{\mm}=\tilde{\mm}(\pe_0,\pe_{1},\la)$ given by~\er{tilm},
	\item choose another constant $\tilde{a}_0\in\fik^1$,
	\item substitute $\pe_0=\tilde{a}_0$ in~$\tilde{\mm}(\pe_0,\pe_{1},\la)$ 
	and consider $\tilde{\mm}(\tilde{a}_0,\pe_0,\la)=\cS^{-1}\big(\tilde{\mm}(\tilde{a}_0,\pe_1,\la)\big)$,
	\item compute $\mathbf{g}(\pe_0,\pe_{1},\la)$ given by~\er{guu1},
	\item apply the obtained gauge transformation $\mathbf{g}(\pe_0,\pe_{1},\la)$ 
	to the MLR~\er{bMbce},~\er{bUbce} and compute the gauge equivalent MLR $(\hat{\mm},\,\hat{\lt})$
	given by~\er{hatmm},~\er{hatlt}.
\end{itemize}
It is convenient to take $a_0=1$. 
Using $\mm(a_0,\pe_1,\pe_2,\la)$ obtained by substituting $\pe_0=a_0=1$ in~\er{bMbce},
we get
\begin{gather}
\lb{bmu0u1}
\mm(a_0,\pe_0,\pe_1,\la)=\cS^{-1}\big(\mm(a_0,\pe_1,\pe_2,\la)\big)=
\cS^{-1}\left(\begin{smallmatrix}
 0 & 1 & 0 \\
 0 & 0 & 1 \\
 -\pe_{1} (\pe_{2}+\mc) & 0 & \la \\
\end{smallmatrix}\right)
=\left(\begin{smallmatrix}
 0 & 1 & 0 \\
 0 & 0 & 1 \\
 -\pe_{0} (\pe_{1}+\mc) & 0 & \la \\
\end{smallmatrix}\right).
\end{gather}
Using~\er{bMbce} and~\er{bmu0u1}, 
one computes $\tilde{\mm}=\tilde{\mm}(\pe_0,\pe_{1},\la)$ given by~\er{tilm} as follows
\begin{gather}
\notag
\tilde{\mm}=
\mm(a_0,\pe_1,\pe_2,\la)^{-1}\cdot\mm(\pe_0,\pe_1,\pe_2,\la)\cdot\mm(a_0,\pe_0,\pe_1,\la)=
\left(\begin{smallmatrix}
 0 & \pe_{0} & 0 \\
 0 & 0 & 1 \\
 -\pe_{0} (\pe_{1}+\mc) & 0 & \la \\
\end{smallmatrix}\right).
\end{gather}
Now we take $\tilde{a}_0=1$ and substitute $\pe_0=\tilde{a}_0=1$ in~$\tilde{\mm}(\pe_0,\pe_{1},\la)$.
Then
\begin{gather}
\notag
\tilde{\mm}(\tilde{a}_0,\pe_0,\la)=\cS^{-1}\big(\tilde{\mm}(\tilde{a}_0,\pe_1,\la)\big)=
\cS^{-1}\left(\begin{smallmatrix}
 0 & 1 & 0 \\
 0 & 0 & 1 \\
 -\pe_{1}-\mc & 0 & \la \\
\end{smallmatrix}\right)=
\left(\begin{smallmatrix}
 0 & 1 & 0 \\
 0 & 0 & 1 \\
 -\pe_{0}-\mc & 0 & \la \\
\end{smallmatrix}\right).
\end{gather}
This allows us to compute $\mathbf{g}(\pe_0,\pe_{1},\la)$ given by~\er{guu1} as follows
\begin{gather}
\lb{bcguu1}
\mathbf{g}(\pe_0,\pe_{1},\la)=
\tilde{\mm}(\tilde{a}_0,\pe_0,\la)^{-1}\cdot\mm(a_0,\pe_0,\pe_1,\la)^{-1}=
\left(\begin{smallmatrix}
 \frac{\la}{\pe_{0}+\mc} & -\frac{1}{\pe_{0}+\mc} & 0 \\
 0 & \frac{\la}{\pe_{0} (\pe_{1}+\mc)} & -\frac{1}{\pe_{0} (\pe_{1}+\mc)} \\
 1 & 0 & 0 \\
\end{smallmatrix}\right).
\end{gather}
Finally, applying the obtained gauge 
transformation~\er{bcguu1}	to the MLR~\er{bMbce},~\er{bUbce} 
and computing the corresponding MLR $(\hat{\mm},\,\hat{\lt})$ 
given by~\er{hatmm},~\er{hatlt}, we obtain
\begin{gather}
\lb{bchmm}
\hat{\mm}(\pe_{0},\la)=
\cS\big(\mathbf{g}(\pe_0,\pe_{1},\la)\big)\cdot
\mm(\pe_0,\pe_1,\pe_2,\la)\cdot\mathbf{g}(\pe_0,\pe_{1},\la)^{-1}=
\begin{pmatrix}
 0 & \pe_{0} & 0 \\
 0 & 0 & \pe_{0} \\
 -(\pe_{0}+\mc) & 0 & \la \\
\end{pmatrix},\\
\lb{bchlt}
\begin{split}
&\hat{\lt}(\pe_{-2},\pe_{-1},\pe_{0},\pe_{1},\la)=
\tdt\big(\mathbf{g}(\pe_0,\pe_{1},\la)\big)\cdot
\mathbf{g}(\pe_0,\pe_{1},\la)^{-1}+
\mathbf{g}(\pe_0,\pe_{1},\la)\cdot\lt\cdot\mathbf{g}(\pe_0,\pe_{1},\la)^{-1}=\\
&=\left(
\begin{smallmatrix}
 \pe_{0}(\mc(\pe_{-1}+\pe_{1})+\pe_{-1}(\pe_{-2}+\pe_{1})) & 0 & \la \pe_{-2} \pe_{-1} \\
 -\la\pe_{-1}(\pe_{0}+\mc) & 
\mc(\pe_{-2}\pe_{-1}+\pe_{0}\pe_{1})
+\pe_{0}\pe_{-1}(\pe_{-2}+\pe_{1}) & \la^2\pe_{-1} \\
 -\la^2(\pe_{0}+\mc) & -\la\pe_{0}(\pe_{1}+\mc) & 
\la^3+\pe_{-1}(\mc(\pe_{-2}+\pe_{0})+\pe_{0}(\pe_{-2}+\pe_{1})) \\
\end{smallmatrix}\right).
\end{split}
\end{gather}
In agreement with Theorem~\ref{thmuknew}, we see that
the matrix-function~\er{bchmm} depends only on~$\pe_{0},\,\la$,
in contrast to the matrix-function~\er{bMbce} depending on~$\pe_{0},\,\pe_{1},\,\pe_2,\,\la$.

Fix a constant $\ap\in\mathbb{C}$.
Below we consider the matrices~\er{bchmm},~\er{bchlt} with~$\la$ replaced by~$\ap$.

According to Definition~\ref{dmlpgtnew}, since the matrices~\er{bchmm},~\er{bchlt} 
form a MLR for equation~\er{bpeq}, we can consider the auxiliary linear system
\begin{gather}
\lb{bsyspsi3}
\cS(\Psi)=\hat{\mm}(\pe_{0},\ap)\cdot\Psi,\qquad\quad
\pd_t(\Psi)=\hat{\lt}(\pe_{-2},\pe_{-1},\pe_{0},\pe_{1},\ap)\cdot\Psi,
\end{gather}
which is compatible modulo equation~\eqref{bpeq}. 
Here $\Psi=
\left(\begin{smallmatrix}
\psi^{11}&\psi^{12}&\psi^{13}\\
\psi^{21}&\psi^{22}&\psi^{23}\\
\psi^{31}&\psi^{32}&\psi^{33}
\end{smallmatrix}\right)$
is an invertible $3\times 3$ matrix-function with elements $\psi^{ij}=\psi^{ij}(n,t)$, $\,i,j=1,2,3$.
Using~\er{bchmm},~\er{bchlt} with~$\la$ replaced by~$\ap$,
we see that equations~\er{bsyspsi3} read
\begin{gather}
\lb{bcsp}
\cS(\psi^{1j})=\pe_{0}\psi^{2j},\qquad
\cS(\psi^{2j})=\pe_{0}\psi^{3j},\qquad
\cS(\psi^{3j})=-(\pe_{0}+\mc)\psi^{1j}+\ap\psi^{3j},\qquad j=1,2,3,\\
\lb{dtp}
\pd_t(\psi^{1j})=\pe_{0}(\mc(\pe_{-1}+\pe_{1})+\pe_{-1}(\pe_{-2}+\pe_{1}))\psi^{1j}
+\ap \pe_{-2} \pe_{-1}\psi^{3j},\\
\lb{dtp2}
\pd_t(\psi^{2j})=-\ap\pe_{-1}(\pe_{0}+\mc)\psi^{1j}+
\big(\mc(\pe_{-2}\pe_{-1}+\pe_{0}\pe_{1})
+\pe_{0}\pe_{-1}(\pe_{-2}+\pe_{1})\big)\psi^{2j}+
\ap^2\pe_{-1}\psi^{3j},\\
\lb{dtp3}
\pd_t(\psi^{3j})=-\ap^2(\pe_{0}+\mc)\psi^{1j}
-\ap\pe_{0}(\pe_{1}+\mc)\psi^{2j}+
\big(\ap^3+\pe_{-1}(\mc(\pe_{-2}+\pe_{0})+\pe_{0}(\pe_{-2}+\pe_{1}))\big)\psi^{3j}.
\end{gather}

We set $\vb^1=\dfrac{\psi^{11}}{\psi^{21}}$ and $\vb^2=\dfrac{\psi^{31}}{\psi^{21}}$.
From~\er{bcsp} it follows that 
$\cS(\vv^1)=\dfrac{\cS(\psi^{11})}{\cS(\psi^{21})}=
\dfrac{\pe_{0}\psi^{21}}{\pe_{0}\psi^{31}}=\dfrac{\psi^{21}}{\psi^{31}}=\dfrac{1}{\vb^2}$,
which implies $\vb^2=\dfrac{1}{\cS(\vb^1)}$.
Using this and the relation 
$\dfrac{\psi^{11}}{\psi^{31}}=\dfrac{\psi^{11}/\psi^{21}}{\psi^{31}/\psi^{21}}=
\dfrac{\vb^1}{\vb^2}=\vb^1\cS(\vb^1)$, we get
\begin{gather}
\label{svb2}
\cS(\vv^2)=\frac{\cS(\psi^{31})}{\cS(\psi^{21})}=
\frac{-(\pe_{0}+\mc)\psi^{11}+\ap\psi^{31}}{\pe_{0}\psi^{31}}=
\frac{-(\pe_{0}+\mc)\psi^{11}/\psi^{31}+\ap}{\pe_{0}}=
\frac{-(\pe_{0}+\mc)\vb^1\cS(\vb^1)+\ap}{\pe_{0}}.
\end{gather}
Since $\cS(\vv^2)=\cS\Big(\dfrac{1}{\cS(\vb^1)}\Big)=\dfrac{1}{\cS^2(\vb^1)}$,
equation~\er{svb2} yields
\begin{gather}
\label{bu0vb1}
\pe_{0}=\frac{\ap\cS^2(\vb^1)+\mc}{\vb^1\cS(\vb^1)\cS^2(\vb^1)+1}-\mc.
\end{gather}

We set $\wf=\vb^1$ and $\wf_\ell=\cS^\ell(\vb^1)$ for $\ell\in\zz$. Then~\er{bu0vb1} can be written as
\begin{gather}
\label{bu0vb}
\pe_{0}=\frac{\ap \wf_{2}+\mc}{\wf_{0}\wf_{1}\wf_{2}+1}-\mc.
\end{gather}
For each $\ell\in\zz$, applying the operator~$\cS^\ell$ to equation~\er{bu0vb}, one derives
\begin{gather}
\label{ellbu0vb}
\pe_{\ell}=\frac{\ap \wf_{\ell+2}+\mc}{\wf_{\ell}\wf_{\ell+1}\wf_{\ell+2}+1}-\mc
\qquad\quad\forall\,\ell\in\zz.
\end{gather}
Furthermore, we have
\begin{gather}
\label{psipsi}
\frac{\psi^{11}}{\psi^{21}}=\vb^1=\wf,\qquad\quad
\frac{\psi^{31}}{\psi^{21}}=\vb^2=\frac{1}{\cS(\vb^1)}=\frac{1}{\cS(\wf)}.
\end{gather}

Using~\er{dtp},~\er{dtp2},~\er{psipsi}, one obtains
\begin{multline}
\label{dtwlong}
\pd_t(\wf)=\pd_t\Big(\frac{\psi^{11}}{\psi^{21}}\Big)=
\frac{\pd_t(\psi^{11})}{\psi^{21}}-\frac{\psi^{11}}{\psi^{21}}\cdot\frac{\pd_t(\psi^{21})}{\psi^{21}}
=\pe_{0}(\mc(\pe_{-1}+\pe_{1})+\pe_{-1}(\pe_{-2}+\pe_{1}))\frac{\psi^{11}}{\psi^{21}}
+\ap \pe_{-2} \pe_{-1}\frac{\psi^{31}}{\psi^{21}}+\\
+\frac{\psi^{11}}{\psi^{21}}\cdot
\frac{\ap\pe_{-1}(\pe_{0}+\mc)\psi^{11}}{\psi^{21}}
-\frac{\psi^{11}}{\psi^{21}}\cdot
\frac{\big(\mc(\pe_{-2}\pe_{-1}+\pe_{0}\pe_{1})
+\pe_{0}\pe_{-1}(\pe_{-2}+\pe_{1})\big)\psi^{21}}{\psi^{21}}
-\frac{\psi^{11}}{\psi^{21}}\cdot
\frac{\ap^2\pe_{-1}\psi^{31}}{\psi^{21}}=\\
=\pe_{0}\big(\mc(\pe_{-1}+\pe_{1})+\pe_{-1}(\pe_{-2}+\pe_{1})\big)\wf+
\frac{\ap\pe_{-2}\pe_{-1}}{\cS(\wf)}+\\
+\ap\pe_{-1}(\pe_{0}+\mc)\wf^2
-\big(\mc(\pe_{-2}\pe_{-1}+\pe_{0}\pe_{1})+\pe_{0}\pe_{-1}(\pe_{-2}+\pe_{1})\big)\wf
-\frac{\ap^2\pe_{-1}\wf}{\cS(\wf)}.
\end{multline}
Substituting~\er{ellbu0vb} to the right-hand side of~\er{dtwlong},
we get the following equation
\begin{gather}
\label{dtwf}
\pd_t(\wf)=
\frac{\wf_{0}(\wf_{2} \wf_{1}-\wf_{-1} \wf_{-2}) (\mc \wf_{-1}\wf_{0}-\ap)(\mc\wf_{0}\wf_{1}-\ap) (\ap \wf_{0}+\mc)}{(\wf_{-2} \wf_{-1} \wf_{0}+1) (\wf_{-1} \wf_{0} \wf_{1}+1) (\wf_{0} \wf_{1} \wf_{2}+1)}.
\end{gather}

Recall that system~\er{bsyspsi3} is compatible modulo equation~\eqref{bpeq}
and is equivalent to system~\er{bcsp}, \er{dtp}, \er{dtp2}, \er{dtp3}.
Since equations~\er{bu0vb},~\er{dtwf} are derived from~\er{bcsp},~\er{dtp},~\er{dtp2}, 
system~\er{bu0vb},~\er{dtwf} is compatible modulo equation~\eqref{bpeq} as well.
This implies that \er{bu0vb} is a MT from~\er{dtwf} to~\eqref{bpeq}.

Equation~\er{dtwf} was obtained previously 
in~\cite[equation~(161)]{BIg2016} by a different construction.
In the case when $\mc\neq 0$ and $\ap\neq 0$ equation~\er{dtwf}
can be transformed to~\cite[equation~(5.11)]{xenit2018} by scaling.

As discussed in Remark~\ref{rmtnib}, 
the MT~\er{bu0vb} from~\er{dtwf} to~\eqref{bpeq} 
with arbitrary constants~$\mc,\ap\in\fik$ seems to be new.
In the case when $\mc\neq 0$ and $\ap\neq 0$ the MT~\er{bu0vb} 
can be transformed to the MT in~\cite[page 16]{xenit2018}
by scaling and shift.

\begin{remark}
\lb{ridtwf}
As shown above, the matrices~\er{bchmm},~\er{bchlt} form 
a $\la$-dependent MLR for~\er{bpeq}, 
while formula~\er{bu0vb} is a MT from~\er{dtwf} to~\eqref{bpeq}.
These two facts imply the following.
Substituting~\er{bu0vb} in~\er{bchmm},~\er{bchlt}, one 
obtains a $\la$-dependent MLR for equation~\er{dtwf}.  
This MLR for~\er{dtwf} seems to be new.

Equation~\er{dtwf} is integrable in the sense that 
\er{dtwf} possesses a MLR with parameter~$\la$
and is connected by the MT~\er{bu0vb} 
to the well-known integrable modified NIB equation~\er{bpeq}.
\end{remark}

\section{Integrable equations, Lax representations, 
and Miura-type transformations related to the Zhang--Chen lattice}
\lb{szhch}

In this section we have $\diunew=3$. 
Below we consider scalar functions $\zc^{i}=\zc^{i}(n,t)$, $\,i=1,2,3$,
and the $3$-component vector-functions 
$\zc_\ell=(\zc^1_\ell,\zc^2_\ell,\zc^3_\ell)$, 
where $\zc^{i}_\ell=\cS^\ell(\zc^{i})$ for $\ell\in\zz$ can be regarded as dynamical variables.

D.~Zhang and D.~Chen~\cite{ZhangChen2002} 
introduced the following $3$-component integrable equation
\begin{gather}
\label{zceq}
\left\{
\begin{aligned}
\pd_t(\zc^1)&=\zc^{2}_{1}-\zc^{2}_{0},\\ 
\pd_t(\zc^2)&=\zc^{2}_{0}(\zc^{1}_{-1}-\zc^{1}_{0})+\zc^{3}_{0}-\zc^{3}_{-1},\\
\pd_t(\zc^3)&=\zc^{3}_{0}(\zc^{1}_{-1}-\zc^{1}_{1}).
\end{aligned}\right.
\end{gather}
It is known~\cite{ZhangChen2002,kmw2013} that equation~\er{zceq} possesses the MLR
\begin{gather}
\lb{nmlrbc}
\mm^{\text{\tiny{ZC}}}(\zc_0,\la)=
\begin{pmatrix}
 0 & 1 & 0 \\
 \zc^{2}_{0} & \la+\zc^{1}_{0} & 1 \\
 \zc^{3}_{0} & 0 & 0 \\
\end{pmatrix},\qquad\quad
\lt^{\text{\tiny{ZC}}}(\zc_{-1},\zc_0,\la)=  
\begin{pmatrix}
 -\zc^{1}_{-1} & 1 & 0 \\
 \zc^{2}_{0} & \la & 1 \\
 \zc^{3}_{-1} & 0 & -\zc^{1}_{0} \\
\end{pmatrix}.
\end{gather}

Fix a constant $\mc\in\mathbb{C}$.
In what follows we use the matrices~\er{nmlrbc} with~$\la$ replaced by~$\mc$.

According to Definition~\ref{dmlpgtnew}, since \er{nmlrbc} 
is a MLR for equation~\er{zceq}, we can consider the auxiliary linear system
\begin{gather}
\lb{zsyspsi}
\cS(\Psi)=\mm^{\text{\tiny{ZC}}}(\zc_0,\mc)\cdot\Psi,\qquad\quad
\pd_t(\Psi)=\lt^{\text{\tiny{ZC}}}(\zc_{-1},\zc_0,\mc)\cdot\Psi,
\end{gather}
which is compatible modulo equation~\eqref{zceq}. 
Here $\Psi=
\left(\begin{smallmatrix}
\psi^{11}&\psi^{12}&\psi^{13}\\
\psi^{21}&\psi^{22}&\psi^{23}\\
\psi^{31}&\psi^{32}&\psi^{33}
\end{smallmatrix}\right)$
is an invertible $3\times 3$ matrix-function with elements $\psi^{ij}=\psi^{ij}(n,t)$, $\,i,j=1,2,3$.
Using~\er{nmlrbc} with~$\la$ replaced by~$\mc$, we see that equations~\er{zsyspsi} read
\begin{gather}
\lb{spsic}
\cS(\psi^{1j})=\psi^{2j},\qquad
\cS(\psi^{2j})=\zc^{2}_{0}\psi^{1j}+(\mc+\zc^{1}_{0})\psi^{2j}+\psi^{3j},\qquad
\cS(\psi^{3j})=\zc^{3}_{0}\psi^{1j},\qquad j=1,2,3,\\
\lb{dtpsic}
\pd_t(\psi^{1j})=-\zc^{1}_{-1}\psi^{1j}+\psi^{2j},\qquad
\pd_t(\psi^{2j})=\zc^{2}_{0}\psi^{1j}+\mc\psi^{2j}+\psi^{3j},\qquad
\pd_t(\psi^{3j})=\zc^{3}_{-1}\psi^{1j}-\zc^{1}_{0}\psi^{3j}.
\end{gather}
We set
\begin{gather}
\lb{pezc}
\pe^1=\zc^{1}=\zc^{1}_0,\qquad\quad\pe^2=\frac{\psi^{21}}{\psi^{11}},
\qquad\quad\pe^3=\frac{\psi^{31}}{\psi^{11}},\qquad
\pe^i_\ell=\cS^{\ell}(\pe^i),\qquad i=1,2,3,\quad\ell\in\zz.
\end{gather}
From~\er{spsic},~\er{pezc} it follows that
\begin{gather}
\lb{p21z}
\pe^2_1=\cS(\pe^2)=\frac{\cS(\psi^{21})}{\cS(\psi^{11})}=
\frac{\zc^{2}_{0}\psi^{11}+(\mc+\zc^{1}_{0})\psi^{21}+\psi^{31}}{\psi^{21}}=
\zc^{2}_{0}\frac{1}{\pe^2}+(\mc+\pe^1)+\frac{\pe^3}{\pe^2},\\
\lb{p31z}
\pe^3_1=\cS(\pe^3)=\frac{\cS(\psi^{31})}{\cS(\psi^{11})}=
\frac{\zc^{3}_{0}\psi^{11}}{\psi^{21}}=\zc^{3}_{0}\frac{1}{\pe^2}.
\end{gather}
Since $\zc^i_0=\zc^i$, $\,\pe^i_0=\pe^i$ for $i=1,2,3$ 
and $\pe^1=\zc^{1}$, using~\er{p21z} and~\er{p31z}, we get
\begin{gather}
\lb{nmtpbc}
\left\{
\begin{aligned}
\zc^1&=\pe^{1}_{0},\\ 
\zc^2&=-\mc \pe^{2}_{0}-\pe^{1}_{0} \pe^{2}_{0}+\pe^{2}_{1} \pe^{2}_{0}-\pe^{3}_{0},\\
\zc^3&=\pe^{2}_{0} \pe^{3}_{1}.
\end{aligned}\right.
\end{gather}
Furthermore, for each $\ell\in\zz$, applying the operator~$\cS^\ell$ to equations~\er{nmtpbc},
one obtains
\begin{gather}
\label{ellzzz}
\zc^1_\ell=\pe^{1}_{\ell},\qquad\quad
\zc^2_\ell=-\mc \pe^{2}_{\ell}-\pe^{1}_{\ell} \pe^{2}_{\ell}+\pe^{2}_{\ell+1} \pe^{2}_{\ell}-\pe^{3}_{\ell},
\qquad\quad
\zc^3_\ell=\pe^{2}_{\ell} \pe^{3}_{\ell+1}\qquad\quad\forall\,\ell\in\zz.
\end{gather}
Using \er{zceq}, \er{dtpsic}, \er{pezc}, \er{ellzzz}, we derive
\begin{gather}
\label{dpe1}
\pd_t(\pe^1)=\pd_t(\zc^{1})=\zc^{2}_{1}-\zc^{2}_{0}=
\mc\pe^{2}_{0}-\mc \pe^{2}_{1}+\pe^{1}_{0} \pe^{2}_{0}-\pe^{2}_{1} \pe^{2}_{0}
-\pe^{1}_{1} \pe^{2}_{1}+\pe^{2}_{1} \pe^{2}_{2}+\pe^{3}_{0}-\pe^{3}_{1},\\
\label{dpe2}
\begin{split}
\pd_t(\pe^2)&=\pd_t\Big(\frac{\psi^{21}}{\psi^{11}}\Big)=
\frac{\pd_t(\psi^{21})}{\psi^{11}}-\frac{\psi^{21}}{\psi^{11}}\cdot\frac{\pd_t(\psi^{11})}{\psi^{11}}=
\frac{\zc^{2}_{0}\psi^{11}+\mc\psi^{21}+\psi^{31}}{\psi^{11}}
-\pe^2\cdot\frac{-\zc^{1}_{-1}\psi^{11}+\psi^{21}}{\psi^{11}}=\\
&=\zc^{2}_{0}+\mc\pe^2+\pe^3-\pe^2(-\zc^{1}_{-1}+\pe^2)=
\pe^{2}_{0}(\pe^{1}_{-1}-\pe^{1}_{0}-\pe^{2}_{0}+\pe^{2}_{1}),
\end{split}\\
\label{dpe3}
\begin{split}
\pd_t(\pe^3)&=\pd_t\Big(\frac{\psi^{31}}{\psi^{11}}\Big)=
\frac{\pd_t(\psi^{31})}{\psi^{11}}-\frac{\psi^{31}}{\psi^{11}}\cdot\frac{\pd_t(\psi^{11})}{\psi^{11}}=
\frac{\zc^{3}_{-1}\psi^{11}-\zc^{1}_{0}\psi^{31}}{\psi^{11}}
-\pe^3\cdot\frac{-\zc^{1}_{-1}\psi^{11}+\psi^{21}}{\psi^{11}}=\\
&=\zc^{3}_{-1}-\zc^{1}_{0}\pe^3-\pe^3(-\zc^{1}_{-1}+\pe^2)
=\pe^{3}_{0}(\pe^{1}_{-1}-\pe^{1}_{0}+\pe^{2}_{-1}-\pe^{2}_{0}).
\end{split}
\end{gather}
From~\er{dpe1},~\er{dpe2},~\er{dpe3}, one gets the $3$-component equation
\begin{gather}
\label{npeq}
\left\{
\begin{aligned}
\pd_t(\pe^1)&=\mc \pe^{2}_{0}-\mc \pe^{2}_{1}+\pe^{1}_{0} \pe^{2}_{0}
-\pe^{2}_{1} \pe^{2}_{0}-\pe^{1}_{1} \pe^{2}_{1}+\pe^{2}_{1} \pe^{2}_{2}+\pe^{3}_{0}-\pe^{3}_{1},\\ 
\pd_t(\pe^2)&=\pe^{2}_{0}(\pe^{1}_{-1}-\pe^{1}_{0}-\pe^{2}_{0}+\pe^{2}_{1}),\\
\pd_t(\pe^3)&=\pe^{3}_{0}(\pe^{1}_{-1}-\pe^{1}_{0}+\pe^{2}_{-1}-\pe^{2}_{0}).
\end{aligned}\right.
\end{gather}

Recall that system~\er{zsyspsi} is compatible modulo equation~\eqref{zceq}
and is equivalent to system~\er{spsic},~\er{dtpsic}. 
Since equations~\er{nmtpbc},~\er{npeq} 
are derived from~\er{spsic},~\er{dtpsic} (by means of introducing~\er{pezc}), 
system~\er{nmtpbc},~\er{npeq} is compatible modulo equation~\eqref{zceq} as well.
This implies that \er{nmtpbc} is a MT from~\er{npeq} to~\er{zceq}.

Below we use the notation~\er{uldiunew} with $\diunew=3$.
That is, for each $\ell\in\zz$ one has $\pe_\ell=(\pe^1_\ell,\pe^2_\ell,\pe^3_\ell)$.

Substituting~\er{nmtpbc} in~\er{nmlrbc}, we obtain the following MLR for equation~\er{npeq}
\begin{gather}
\lb{Mbce}
\mm(\pe_{0},\pe_{1},\la)=
\begin{pmatrix}
 0 & 1 & 0 \\
 -\mc \pe^{2}_{0}-\pe^{1}_{0} \pe^{2}_{0}+\pe^{2}_{1} \pe^{2}_{0}-\pe^{3}_{0} & \la+\pe^{1}_{0} & 1 \\
 \pe^{2}_{0} \pe^{3}_{1} & 0 & 0 \\
\end{pmatrix},\\
\lb{Ubce}
\lt(\pe_{-1},\pe_{0},\pe_{1},\la)=
\begin{pmatrix}
 -\pe^{1}_{-1} & 1 & 0 \\
 -\mc \pe^{2}_{0}-\pe^{1}_{0} \pe^{2}_{0}+\pe^{2}_{1} \pe^{2}_{0}-\pe^{3}_{0} & \la & 1 \\
 \pe^{2}_{-1} \pe^{3}_{0} & 0 & -\pe^{1}_{0} \\
\end{pmatrix}.
\end{gather}

The matrix~\er{Mbce} satisfies condition~\er{pdm1}.
Therefore, we can apply Theorem~\ref{thmuk1} to the MLR~\er{Mbce}, \er{Ubce}.
In order to apply Theorem~\ref{thmuk1} in this case, we need to 
do the following.
\begin{itemize}
	\item Choose a constant vector $a_0=(a^1_0,a^2_0,a^3_0)\in\fik^3$.
	\item Substitute $\pe_0=a_0$ in~$\mm(\pe_{0},\pe_{1},\la)$
given by~\er{Mbce} (i.e., substitute $u^i_0=a^i_0$ for $i=1,2,3$ in~\er{Mbce}).
	\item Compute $\mathbf{g}(\pe_0,\la)$ given by formula~\er{guu0}, 
	which involves the matrix 
	\begin{gather}
\lb{inMmbe}
\mm(a_0,u_1,\la)^{-1}=
\left(\begin{smallmatrix}
 0 & 1 & 0 \\
 -\mc a^{2}_{0}-a^{1}_{0} a^{2}_{0}+\pe^{2}_{1} a^{2}_{0}-a^{3}_{0} & \la+a^{1}_{0} & 1 \\
 a^{2}_{0} \pe^{3}_{1} & 0 & 0 \\
\end{smallmatrix}\right)^{-1}=
\left(\begin{smallmatrix}
 0 & 0 & \frac{1}{a^{2}_{0} \pe^{3}_{1}} \\
 1 & 0 & 0 \\
 -\la-a^{1}_{0} & 1 & 
\frac{\mc a^{2}_{0}+a^{1}_{0} a^{2}_{0}-\pe^{2}_{1} a^{2}_{0}+a^{3}_{0}}{a^{2}_{0} \pe^{3}_{1}} \\
\end{smallmatrix}\right).
\end{gather}
\item Apply the obtained gauge transformation $\mathbf{g}(\pe_0,\la)$ 
to the MLR~\er{Mbce},~\er{Ubce} and compute the gauge equivalent MLR $(\hat{\mm},\,\hat{\lt})$
given by~\er{hatmm1},~\er{hatlt1}.
\end{itemize}
We can choose for $a_0=(a^1_0,a^2_0,a^3_0)$ any constant vector 
such that \er{inMmbe} is well defined.
In order to make formula~\er{inMmbe} as simple as possible, we take 
\begin{gather}
\lb{aaa} 
a^1_0=0,\qquad a^2_0=1,\qquad a^3_0=0.
\end{gather}
Then, using~\er{inMmbe} with~\er{aaa}, we see that \er{guu0} reads
\begin{gather}
\lb{guu0z}
\mathbf{g}(u_0,\la)=
\cS^{-1}\big(\mm(a_0,u_1,\la)^{-1}\big)=
\cS^{-1}\left(\begin{smallmatrix}
 0 & 0 & \frac{1}{\pe^{3}_{1}} \\
 1 & 0 & 0 \\
 -\la & 1 & \frac{\mc-\pe^{2}_{1}}{\pe^{3}_{1}} \\
\end{smallmatrix}\right)=
\left(\begin{smallmatrix}
 0 & 0 & \frac{1}{\pe^{3}_{0}} \\
 1 & 0 & 0 \\
 -\la & 1 & \frac{\mc-\pe^{2}_{0}}{\pe^{3}_{0}} \\
\end{smallmatrix}\right).
\end{gather}

Applying the obtained gauge transformation~\er{guu0z} to the MLR~\er{Mbce},~\er{Ubce} 
and computing the MLR $(\hat{\mm},\,\hat{\lt})$ given by~\er{hatmm1},~\er{hatlt1}, we get
\begin{gather}
\lb{zchmm}
\hat{\mm}(\pe_{0},\la)=
\cS\big(\mathbf{g}(\pe_0,\la)\big)\cdot\mm(\pe_0,\pe_1,\la)\cdot\mathbf{g}(\pe_0,\la)^{-1}=
\begin{pmatrix}
 0 & \pe^{2}_{0} & 0 \\
 \pe^{2}_{0}-\mc & \la & 1 \\
 \pe^{3}_{0}-\mc \pe^{1}_{0}+\pe^{2}_{0} \pe^{1}_{0} & \la \pe^{1}_{0}-\pe^{2}_{0} \pe^{1}_{0}-\pe^{3}_{0} & \pe^{1}_{0} \\
\end{pmatrix},\\
\lb{zchlt}
\begin{split}
\hat{\lt}(&\pe_{-1},\pe_{0},\pe_{1},\la)=\tdt\big(\mathbf{g}(\pe_0,\la)\big)\cdot
\mathbf{g}(\pe_0,\la)^{-1}+
\mathbf{g}(\pe_0,\la)\cdot\lt\cdot\mathbf{g}(\pe_0,\la)^{-1}=\\
&=
\left(\begin{smallmatrix}
 \pe^{2}_{0}-\pe^{1}_{-1}-\pe^{2}_{-1} & \pe^{2}_{-1} & 0 \\
 \pe^{2}_{0}-\mc & \la-\pe^{1}_{-1} & 1 \\
 \pe^{3}_{0}-\mc \pe^{1}_{-1}-\mc \pe^{2}_{-1}+\mc \pe^{2}_{0}+\pe^{1}_{0} \pe^{2}_{0}+
\pe^{2}_{-1} \pe^{2}_{0}-\pe^{2}_{0} \pe^{2}_{1} & \la \pe^{1}_{-1}+
\mc \pe^{2}_{-1}-\mc \pe^{2}_{0}-\pe^{1}_{0} \pe^{2}_{0}-\pe^{2}_{-1} \pe^{2}_{0}+
\pe^{2}_{0} \pe^{2}_{1}-\pe^{3}_{0} & 0 \\
\end{smallmatrix}\right).
\end{split}
\end{gather}
In agreement with Theorem~\ref{thmuk1}, we see that
the matrix~\er{zchmm} depends only on~$\pe_{0},\,\la$,
in contrast to the matrix~\er{Mbce} depending on~$\pe_{0},\,\pe_{1},\,\la$.

Equation~\er{npeq} is integrable, 
since \er{npeq} has the MLR~\er{zchmm},~\er{zchlt} with parameter~$\la$
and is connected by the MT~\er{nmtpbc} 
to the integrable Zhang--Chen equation~\er{zceq}.
Equation~\er{npeq}, the MLR~\er{zchmm},~\er{zchlt}, 
and the MT~\er{nmtpbc} seem to be new.

Fix a constant $\ap\in\mathbb{C}$.
In what follows we use the matrices~\er{zchmm},~\er{zchlt} with~$\la$ replaced by~$\ap$.

Since \er{zchmm},~\er{zchlt} 
form a MLR for equation~\er{npeq}, we can consider the auxiliary linear system
\begin{gather}
\lb{hzsyspsi}
\cS(\hat{\Psi})=\hat{\mm}(\pe_{0},\ap)\cdot\hat{\Psi},\qquad\quad
\pd_t(\hat{\Psi})=\hat{\lt}(\pe_{-1},\pe_{0},\pe_{1},\ap)\cdot\hat{\Psi},
\end{gather}
which is compatible modulo equation~\eqref{npeq}. 
Here $\hat{\Psi}=
\left(\begin{smallmatrix}
\hat{\psi}^{11}&\hat{\psi}^{12}&\hat{\psi}^{13}\\
\hat{\psi}^{21}&\hat{\psi}^{22}&\hat{\psi}^{23}\\
\hat{\psi}^{31}&\hat{\psi}^{32}&\hat{\psi}^{33}
\end{smallmatrix}\right)$
is an invertible $3\times 3$ matrix-function with elements $\hat{\psi}^{ij}=\hat{\psi}^{ij}(n,t)$, $\,i,j=1,2,3$.
Using~\er{zchmm},~\er{zchlt} with~$\la$ replaced by~$\ap$, we see that equations~\er{hzsyspsi} read
\begin{gather}
\begin{gathered}
\lb{hspsic}
\cS(\hat{\psi}^{1j})=\pe^{2}_{0}\hat{\psi}^{2j},\qquad
\cS(\hat{\psi}^{2j})=(\pe^{2}_{0}-\mc)\hat{\psi}^{1j}+\ap\hat{\psi}^{2j}+\hat{\psi}^{3j},\\
\cS(\hat{\psi}^{3j})=(\pe^{3}_{0}-\mc \pe^{1}_{0}+\pe^{2}_{0} \pe^{1}_{0})\hat{\psi}^{1j}+
(\ap\pe^{1}_{0}-\pe^{2}_{0} \pe^{1}_{0}-\pe^{3}_{0})\hat{\psi}^{2j}+\pe^{1}_{0}\hat{\psi}^{3j},
\qquad j=1,2,3,
\end{gathered}\\
\lb{hdtpsic}
\begin{gathered}
\pd_t(\hat{\psi}^{1j})=(\pe^{2}_{0}-\pe^{1}_{-1}-\pe^{2}_{-1})\hat{\psi}^{1j}+\pe^{2}_{-1}\hat{\psi}^{2j},
\qquad
\pd_t(\hat{\psi}^{2j})=(\pe^{2}_{0}-\mc)\hat{\psi}^{1j}+(\ap-\pe^{1}_{-1})\hat{\psi}^{2j}+\hat{\psi}^{3j},\\
\pd_t(\hat{\psi}^{3j})=
(\pe^{3}_{0}-\mc \pe^{1}_{-1}-\mc \pe^{2}_{-1}+\mc \pe^{2}_{0}
+\pe^{1}_{0} \pe^{2}_{0}+\pe^{2}_{-1} \pe^{2}_{0}-\pe^{2}_{0} \pe^{2}_{1})\hat{\psi}^{1j}+\\
+(\ap\pe^{1}_{-1}+\mc\pe^{2}_{-1}-\mc \pe^{2}_{0}-\pe^{1}_{0}\pe^{2}_{0}-\pe^{2}_{-1}\pe^{2}_{0}+
\pe^{2}_{0} \pe^{2}_{1}-\pe^{3}_{0})\hat{\psi}^{2j}.
\end{gathered}
\end{gather}
We set
\begin{gather}
\lb{hpezc}
\vv^1=\pe^{1}=\pe^{1}_0,\qquad\quad\vv^2=\frac{\hat{\psi}^{21}}{\hat{\psi}^{11}},
\qquad\quad\vv^3=\frac{\hat{\psi}^{31}}{\hat{\psi}^{11}},\qquad
\vv^i_\ell=\cS^{\ell}(\vv^i),\qquad i=1,2,3,\quad\ell\in\zz.
\end{gather}
From~\er{hspsic},~\er{hpezc} it follows that
\begin{gather}
\lb{hp21z}
\vv^2_1=\cS(\vv^2)=\frac{\cS(\hat{\psi}^{21})}{\cS(\hat{\psi}^{11})}=
\frac{(\pe^{2}_{0}-\mc)\hat{\psi}^{11}+\ap\hat{\psi}^{21}+\hat{\psi}^{31}}{\pe^{2}_{0}\hat{\psi}^{21}}=
\frac{(\pe^{2}_{0}-\mc)}{\pe^{2}_{0}\vv^2}+\frac{\ap}{\pe^{2}_{0}}+\frac{\vv^3}{\pe^{2}_{0}\vv^2},\\
\begin{gathered}
\lb{hp31z}
\vv^3_1=\cS(\vv^3)=\frac{\cS(\hat{\psi}^{31})}{\cS(\hat{\psi}^{11})}=
\frac{(\pe^{3}_{0}-\mc \pe^{1}_{0}+\pe^{2}_{0} \pe^{1}_{0})\hat{\psi}^{11}+
(\ap\pe^{1}_{0}-\pe^{2}_{0} \pe^{1}_{0}-\pe^{3}_{0})\hat{\psi}^{21}+
\pe^{1}_{0}\hat{\psi}^{31}}{\pe^{2}_{0}\hat{\psi}^{21}}=\\
=\frac{(\pe^{3}_{0}-\mc \pe^{1}_{0}+\pe^{2}_{0} \pe^{1}_{0})}{\pe^{2}_{0}\vv^2}+
\frac{(\ap\pe^{1}_{0}-\pe^{2}_{0} \pe^{1}_{0}-\pe^{3}_{0})}{\pe^{2}_{0}}+
\frac{\pe^{1}_{0}\vv^3}{\pe^{2}_{0}\vv^2}.
\end{gathered}
\end{gather}
Using the equalities 
$\vv^1=\pe^{1}$, $\,\pe^i_0=\pe^i$, $\,\vv^i_0=\vv^i$, $\,i=1,2,3$, 
and~\er{hp21z},~\er{hp31z}, we get
\begin{gather}
\lb{hnmtpbc}
\left\{
\begin{aligned}
\pe^1&=\vv^{1}_{0},\\ 
\pe^2&=\frac{\ap \vv^{2}_{0}+\vv^{3}_{0}-\mc}{\vv^{2}_{0} \vv^{2}_{1}-1},\\
\pe^3&=\frac{\vv^{2}_{0} (\vv^{2}_{1} \vv^{1}_{0}-\vv^{1}_{0}-\vv^{3}_{1}) (\ap \vv^{2}_{0}+\vv^{3}_{0}-\mc)}{(\vv^{2}_{0}-1) (\vv^{2}_{0} \vv^{2}_{1}-1)}.
\end{aligned}\right.
\end{gather}

For each $\ell\in\zz$, applying the operator~$\cS^\ell$ to equations~\er{hnmtpbc}, one obtains
\begin{gather}
\lb{hpeell}
\pe^1_\ell=\vv^{1}_{\ell},\qquad
\pe^2_\ell=\frac{\ap \vv^{2}_{\ell}+\vv^{3}_{\ell}-\mc}{\vv^{2}_{\ell} \vv^{2}_{\ell+1}-1},\qquad
\pe^3_\ell=\frac{\vv^{2}_{\ell} (\vv^{2}_{\ell+1} \vv^{1}_{\ell}-\vv^{1}_{\ell}-\vv^{3}_{\ell+1}) (\ap \vv^{2}_{\ell}+\vv^{3}_{\ell}-\mc)}{(\vv^{2}_{\ell}-1) (\vv^{2}_{\ell} \vv^{2}_{\ell+1}-1)}\quad\qquad\forall\,\ell\in\zz.
\end{gather}
Using \er{npeq}, \er{hdtpsic}, \er{hpezc}, we derive
\begin{gather}
\lb{dtv1p}
\pd_t(\vv^1)=\pd_t(\pe^{1})=\mc \pe^{2}_{0}-\mc \pe^{2}_{1}+\pe^{1}_{0} \pe^{2}_{0}
-\pe^{2}_{1} \pe^{2}_{0}-\pe^{1}_{1} \pe^{2}_{1}+\pe^{2}_{1} \pe^{2}_{2}+\pe^{3}_{0}-\pe^{3}_{1},\\ 
\lb{dtv2p}
\begin{split}
\pd_t(\vv^2)&=\pd_t\Big(\frac{\hat{\psi}^{21}}{\hat{\psi}^{11}}\Big)=
\frac{\pd_t(\hat{\psi}^{21})}{\hat{\psi}^{11}}
-\frac{\hat{\psi}^{21}}{\hat{\psi}^{11}}\cdot\frac{\pd_t(\hat{\psi}^{11})}{\hat{\psi}^{11}}=\\
&=\frac{(\pe^{2}_{0}-\mc)\hat{\psi}^{11}+(\ap-\pe^{1}_{-1})\hat{\psi}^{21}+\hat{\psi}^{31}}{\hat{\psi}^{11}}
-\frac{\hat{\psi}^{21}}{\hat{\psi}^{11}}\cdot\frac{(\pe^{2}_{0}-\pe^{1}_{-1}-\pe^{2}_{-1})\hat{\psi}^{11}+\pe^{2}_{-1}\hat{\psi}^{21}}{\hat{\psi}^{11}}=\\
&=(\pe^{2}_{0}-\mc)+(\ap-\pe^{1}_{-1})\vv^2+\vv^3
-\vv^2\big((\pe^{2}_{0}-\pe^{1}_{-1}-\pe^{2}_{-1})+\pe^{2}_{-1}\vv^2\big).
\end{split}\\
\lb{dtv3p}
\begin{gathered}
\pd_t(\vv^3)=\pd_t\Big(\frac{\hat{\psi}^{31}}{\hat{\psi}^{11}}\Big)=
\frac{\pd_t(\hat{\psi}^{31})}{\hat{\psi}^{11}}-\frac{\hat{\psi}^{31}}{\hat{\psi}^{11}}\cdot\frac{\pd_t(\hat{\psi}^{11})}{\hat{\psi}^{11}}=
\frac{(\pe^{3}_{0}-\mc \pe^{1}_{-1}-\mc \pe^{2}_{-1}+\mc \pe^{2}_{0}+\pe^{1}_{0} \pe^{2}_{0}
+\pe^{2}_{-1} \pe^{2}_{0}-\pe^{2}_{0} \pe^{2}_{1})\hat{\psi}^{11}}{\hat{\psi}^{11}}+\\
+\frac{(\ap\pe^{1}_{-1}+\mc\pe^{2}_{-1}-\mc \pe^{2}_{0}-\pe^{1}_{0}\pe^{2}_{0}-\pe^{2}_{-1}\pe^{2}_{0}+
\pe^{2}_{0} \pe^{2}_{1}-\pe^{3}_{0})\hat{\psi}^{21}}{\hat{\psi}^{11}}
-\frac{\hat{\psi}^{31}}{\hat{\psi}^{11}}\cdot\frac{(\pe^{2}_{0}-\pe^{1}_{-1}-\pe^{2}_{-1})\hat{\psi}^{11}+\pe^{2}_{-1}\hat{\psi}^{21}}{\hat{\psi}^{11}}=\\
=(\pe^{3}_{0}-\mc \pe^{1}_{-1}-\mc \pe^{2}_{-1}+\mc \pe^{2}_{0}+\pe^{1}_{0} \pe^{2}_{0}
+\pe^{2}_{-1} \pe^{2}_{0}-\pe^{2}_{0} \pe^{2}_{1})+\\
+(\ap\pe^{1}_{-1}+\mc\pe^{2}_{-1}-\mc \pe^{2}_{0}-\pe^{1}_{0}\pe^{2}_{0}-\pe^{2}_{-1}\pe^{2}_{0}+
\pe^{2}_{0} \pe^{2}_{1}-\pe^{3}_{0})\vv^2
-\vv^3\big((\pe^{2}_{0}-\pe^{1}_{-1}-\pe^{2}_{-1})+\pe^{2}_{-1}\vv^2\big).
\end{gathered}
\end{gather}
Substituting~\er{hpeell} in~\er{dtv1p},~\er{dtv2p},~\er{dtv3p}, 
we get a $3$-component equation of the form
\begin{gather}
\label{hnpeq}
\left\{
\begin{aligned}
\pd_t(\vv^1)&=F^1(\mc,\ap,\vv^1_\ell,\vv^2_\ell,\vv^3_\ell,\dots),\\ 
\pd_t(\vv^2)&=F^2(\mc,\ap,\vv^1_\ell,\vv^2_\ell,\vv^3_\ell,\dots),\\
\pd_t(\vv^3)&=F^3(\mc,\ap,\vv^1_\ell,\vv^2_\ell,\vv^3_\ell,\dots),
\end{aligned}\right.
\end{gather}
where the functions $F^i(\mc,\ap,\vv^1_\ell,\vv^2_\ell,\vv^3_\ell,\dots)$, $\,i=1,2,3$, 
are obtained by substituting~\er{hpeell} to the right-hand sides
of equations~\er{dtv1p},~\er{dtv2p},~\er{dtv3p}.
We do not present explicit formulas for these functions,
since they are rather cumbersome.

Recall that system~\er{hzsyspsi} is compatible modulo equation~\eqref{npeq}
and is equivalent to system~\er{hspsic},~\er{hdtpsic}. 
Since equations~\er{hnmtpbc}--\er{hnpeq} 
are derived from~\er{hspsic},~\er{hdtpsic} (by means of introducing~\er{hpezc}), 
system~\er{hnmtpbc}--\er{hnpeq} is compatible modulo equation~\eqref{npeq} as well.
This implies that \er{hnmtpbc} is a MT from~\er{hnpeq} to~\er{npeq}.

Computing the composition of the MTs~\er{nmtpbc} and~\er{hnmtpbc}, 
we get the following MT
\begin{gather}
\lb{cmtzc}
\left\{
\begin{aligned}
\zc^1&=\vv^{1}_{0},\\ 
\zc^2&=\frac{(\ap \vv^{2}_{0}+\vv^{3}_{0}-\mc)
\big(\vv^{2}_{1}(\ap \vv^{2}_{0}-\ap+\mc \vv^{2}_{2}-\mc \vv^{2}_{0} \vv^{2}_{2}-
\vv^{1}_{0} \vv^{2}_{0} \vv^{2}_{2} \vv^{2}_{1}+\vv^{1}_{0} \vv^{2}_{0}+\vv^{1}_{0} \vv^{2}_{2}
+\vv^{2}_{0} \vv^{2}_{2} \vv^{3}_{1})-\vv^{1}_{0}-\vv^{3}_{1}\big)}{(\vv^{2}_{0}-1) (\vv^{2}_{0} \vv^{2}_{1}-1) (\vv^{2}_{1} \vv^{2}_{2}-1)},\\
\zc^3&=\frac{\vv^{2}_{1}(\vv^{2}_{2} \vv^{1}_{1}-\vv^{1}_{1}-\vv^{3}_{2})(\ap \vv^{2}_{0}+\vv^{3}_{0}-\mc)^2}{(\vv^{2}_{1}-1) (\vv^{2}_{0}\vv^{2}_{1}-1) (\vv^{2}_{1}\vv^{2}_{2}-1)}
\end{aligned}\right.
\end{gather}
from equation~\er{hnpeq} to equation~\er{zceq}.

\begin{remark}
\lb{rihnpeq}
As shown above, the matrices~\er{zchmm},~\er{zchlt} form 
a $\la$-dependent MLR for equation~\er{npeq}, 
while \er{hnmtpbc} is a MT from~\er{hnpeq} to~\er{npeq}.
These two facts yield the following.
Substituting~\er{hnmtpbc} in~\er{zchmm},~\er{zchlt}, 
we get a $\la$-dependent MLR for equation~\er{hnpeq}.  
This MLR for~\er{hnpeq} seems to be new.

Equation~\er{hnpeq} is integrable in the sense that 
\er{hnpeq} has a MLR with parameter~$\la$
and is connected by the MT~\er{cmtzc} 
to the integrable Zhang--Chen equation~\er{zceq}.
Equation~\er{hnpeq} and the MTs~\er{hnmtpbc},~\er{cmtzc} seem to be new as well.
\end{remark}

\section*{Acknowledgments}

The author would like to thank Pavlos Xenitidis for useful discussions.

This work was supported by the Russian Science Foundation (grant No. 25-21-00454, 
\url{https://rscf.ru/en/project/25-21-00454/} ).

\end{document}